%%%% ijcai25.tex

\typeout{IJCAI--25 Instructions for Authors}

% These are the instructions for authors for IJCAI-25.

\documentclass{article}
\pdfpagewidth=8.5in
\pdfpageheight=11in

% The file ijcai25.sty is a copy from ijcai22.sty
% The file ijcai22.sty is NOT the same as previous years'
\usepackage{ijcai25}
% Use the postscript times font!
\usepackage{times}
\usepackage{soul}
\usepackage{url}
\usepackage[hidelinks]{hyperref}
\usepackage[utf8]{inputenc}
\usepackage[small]{caption}
\usepackage{graphicx}
\usepackage{amsmath}
\usepackage{amsthm}
\usepackage{amssymb}
\usepackage{booktabs}
\usepackage{algorithm}
\usepackage[noend]{algorithmic}
\usepackage[switch]{lineno}
\usepackage[skip=0pt]{caption}

% Comment out this line in the camera-ready submission
%\linenumbers

\urlstyle{same}

% the following package is optional:
%\usepackage{latexsym}

% See https://www.overleaf.com/learn/latex/theorems_and_proofs
% for a nice explanation of how to define new theorems, but keep
% in mind that the amsthm package is already included in this
% template and that you must *not* alter the styling.
\newtheorem{theorem}{Theorem}
\newtheorem{lemma}[theorem]{Lemma}

\newtheorem{corollary}[theorem]{Corollary}

\newtheorem{definition}{Definition}

\newcommand{\order}{{\mathcal O}}
\newcommand{\oi}{O}
\newcommand{\bbribe}{\hat{{\order}}}
\newcommand{\sbribe}{\hat{{\oi}}}

\newcommand{\seats}[2]{\text{\it seats}_{#2}^{#1}}

\newcounter{examplecounter}
\newenvironment{Example}
    {\refstepcounter{examplecounter}\par\noindent\textbf{Example \theexamplecounter.}\hspace{1pt}}
    {\hfill\qedsymbol\par}
%\newenvironment{Example}{\par\noindent\textit{Example.}\hspace{1pt}}{\hfill\qedsymbol\par}

% Define a custom environment for reminders of theorems
\newtheorem*{rtheorem}{\theremindertheorem}

\newcommand{\theremindertheorem}{}

\newenvironment{reminder}[1]
  {\renewcommand{\theremindertheorem}{Reminder of Theorem #1}\begin{rtheorem}}
  {\end{rtheorem}}

\newenvironment{proofnoqed}[1][Proof]{%
  \par\noindent\textbf{#1: }\ignorespaces % Start with "Proof:"
}{%
  \par\medskip % End with spacing
}

% Following comment is from ijcai97-submit.tex:
% The preparation of these files was supported by Schlumberger Palo Alto
% Research, AT\&T Bell Laboratories, and Morgan Kaufmann Publishers.
% Shirley Jowell, of Morgan Kaufmann Publishers, and Peter F.
% Patel-Schneider, of AT\&T Bell Laboratories collaborated on their
% preparation.

% These instructions can be modified and used in other conferences as long
% as credit to the authors and supporting agencies is retained, this notice
% is not changed, and further modification or reuse is not restricted.
% Neither Shirley Jowell nor Peter F. Patel-Schneider can be listed as
% contacts for providing assistance without their prior permission.

% To use for other conferences, change references to files and the
% conference appropriate and use other authors, contacts, publishers, and
% organizations.
% Also change the deadline and address for returning papers and the length and
% page charge instructions.
% Put where the files are available in the appropriate places.

% PDF Info Is REQUIRED.

% Please leave this \pdfinfo block untouched both for the submission and
% Camera Ready Copy. Do not include Title and Author information in the pdfinfo section
\pdfinfo{
/TemplateVersion (IJCAI.2025.0)
}

%\title{IJCAI--25 Formatting Instructions}
\title{Bribery for Coalitions in Parliamentary Elections}

%\author{Paper \#1678}

% Single author syntax
% \author{
%     Author Name
%     \affiliations
%     Affiliation
%     \emails
%     email@example.com
% }

% Multiple author syntax (remove the single-author syntax above and the \iffalse ... \fi here)

\author{
Hodaya Barr
\and
Yonatan Aumann\And
Sarit Kraus\\
\affiliations
Bar-Ilan University, Israel\\
\emails
odayaben@gmail.com, aumann@cs.biu.ac.il, sarit@cs.biu.ac.il
}

\begin{document}

\maketitle

\begin{abstract}
We study the computational complexity of bribery in parliamentary voting, in settings where the briber is (also) interested in the success of an entire set of political parties - a ``coalition'' - rather than an individual party. 
%Motivated by real-world elections, such as the strategic voting seen in the French elections, 
We introduce two variants of the problem: the Coalition-Bribery Problem (CB) and the Coalition-Bribery-with-Preferred-party Problem (CBP). In CB, the goal is to maximize the total number of seats held by a coalition,  while in CBP, there are two objectives: to maximize the votes for the preferred party, while also ensuring that the total number of seats held by the coalition is above the target support (e.g. majority).

We study the complexity of these bribery problems under two positional scoring functions - Plurality and Borda - and for multiple bribery types -  $1$-bribery, $\$$-bribery, swap-bribery, and coalition-shift-bribery. We also consider both the case where seats are only allotted to parties whose number of votes passes some minimum support level and the case with no such minimum.  We provide polynomial-time algorithms to solve some of these problems and prove that the others are NP-hard.
%- including some that are tractable when one is only interested in a single party rather than a coalition. 
\end{abstract}

\section{Introduction}
%story
The complexity of bribery in voting has been widely studied (e.g. \cite{faliszewski2006complexity,elkind2009swap,faliszewski2009hard,faliszewski2016control,tao2022hard,keller2018approximating,yang2020complexity,zhou2020parameterized,elkind2020algorithms}). In the classic bribery setting, voters have preferences over candidates, and a briber seeks to promote a \emph{specific} candidate by way of bribing voters to change their preference orders. In parliamentary elections, however, voters - and bribers - are often interested not only in one specific party but, no less, in a set, or \emph{coalition} of parties. Indeed, promoting a specific party may be of little significance if the party ends up outside the ruling coalition. As such, citizens and bribers may opt to promote parties other than their most favorite one  - if doing so would result in a ruling coalition that better aligns with their views. In the recent French elections, for example, right-wing parties won in the first round, but in the second round, the left-wing and centrist parties united in an alliance. Many voters strategically cast their ballots for candidates who were not their top preference, yet this tactic enabled them to achieve their goal of forming a left-center government \cite{toussaint2024thanks}. Accordingly, we introduce and study bribery in such settings, where the briber is interested (at least partially) in promoting a coalition (/set) of parties, rather than a single party. As we exhibit, bribery in such cases can become significantly more complex than in the single-party case.

We consider two variants of the problem. In the \emph{Coalition-Bribery Problem (CB)} the goal is to maximize the total number of seats of a given coalition of parties in parliament. 
In the Coalition-Bribery-with-Preferred-party Problem (CBP), the goal is to maximize the number of seats won by a specific preferred party within the coalition, while ensuring that the total number of seats of the coalition exceeds some predefined goal (e.g. majority). We consider these two problems in a variety of settings, both in terms of the bribery cost functions and in terms of voting procedures, as follows.

\textbf{Types of Bribery.}
Several types of bribery have been defined in the literature.
In this paper, we consider four: $1$-bribery, $\$$-bribery, swap-bribery, and coalition-shift-bribery (see Section \ref{sec:prelim} for a full explanation).
We note that coalition-shift-bribery is an extension of the model of shift-bribery~\cite{elkind2009swap} to the coalitional setting.  

\textbf{Voting rules and Seat Allocation.}
We consider Plurality and Borda scoring rules.  We also consider both the case where seats are only allotted to parties that pass some minimum support threshold and the case with no such minimum~\cite{slinko2010proportional,put2016complexity}.

\begin{table*}[t]
    \setlength{\belowcaptionskip}{0pt} % Adjust the space below the caption
    \centering
    \begin{tabular}{|c|c|c|c|c|c|}
        \hline
        Bribery type & $1$ & $\$$  & Swap & Coalition-Shift\\
        \hline
         Plurality$_{0}$-CB & P (C\ref{cor:dolar-Plurality-t-CBP-plurality-CBP}) & P (C\ref{cor:dolar-Plurality-t-CBP-plurality-CBP}) &   P (C\ref{cor:Plurality-cbp}) & P (C\ref{cor:Plurality-cbp}) \\
         Plurality$_t$-CB & P (C\ref{cor:dolar-Plurality-t-CBP-plurality-CBP}) & P (C\ref{cor:dolar-Plurality-t-CBP-plurality-CBP})  & NP-hard (C\ref{cor:swap-Plurality-t-CBP})  & NP-hard (T\ref{thm:shift-Plurality-t-CBP})\\
         Plurality$_{0}$-CBP &  P (C\ref{cor:dolar-Plurality-t-CBP-plurality-CBP}) &  P (C\ref{cor:dolar-Plurality-t-CBP-plurality-CBP}) & P (T\ref{thm:Plurality-ccbp}) &  P (T\ref{thm:Plurality-ccbp})\\
         Plurality$_t$-CBP & P (C\ref{cor:dolar-Plurality-t-CBP-plurality-CBP}) & P (T\ref{thm:dolar-Plurality-t-CBPP}) & NP-hard (C\ref{cor:swap-non-shift-Plurality-t-CBPP}) & NP-hard (C\ref{cor:swap-non-shift-Plurality-t-CBPP})\\
        \hline
        Borda$_{0}$-CB & P (C\ref{cor:1-dolar-shift-Borda-CBP}) & P (C\ref{cor:1-dolar-shift-Borda-CBP})& NP-hard (T\ref{thm:swap-borda-cbp})  & P (C\ref{cor:1-dolar-shift-Borda-CBP}) \\ 
        Borda$_t$-CB  &  NP-hard (T\ref{thm:1-Borda-t-cbp})  & NP-hard (T\ref{thm:1-Borda-t-cbp})  & NP-hard (C\ref{cor:shift_swap_non-Borda-t-cbp})$^*$  &  NP-hard (C\ref{cor:shift_swap_non-Borda-t-cbp})$^*$ \\
        Borda$_{0}$-CBP  & P (T\ref{thm:dolar-Borda-CBPP}) & P (T\ref{thm:dolar-Borda-CBPP}) & NP-hard (T\ref{thm:swap-borda-cbp}) & P (T\ref{thm:dolar-Borda-CBPP}) \\
        Borda$_t$-CBP  & NP-hard (C\ref{cor:all-Borda-t-ccbp})  & NP-hard (C\ref{cor:all-Borda-t-ccbp})   & NP-hard (C\ref{cor:all-Borda-t-ccbp})$^*$ &  NP-hard (C\ref{cor:all-Borda-t-ccbp})$^*$\\
        \hline
    \end{tabular}
    \caption{Summary of complexity results. A $t$ subscript denotes the variant with a minimum support threshold, and 0 - the variant without. The parentheses refer to the corresponding Theorem(T) or Corollary(C) in the text. A $^*$ indicates that the result also follows from \protect\cite{put2016complexity}.}
    \label{tab:sum_complexity}
\end{table*}
\textbf{Results.}
A summary of the results is presented in Table~\ref{tab:sum_complexity}. Notably, we show that for Plurality all four variants (CB and CBP, with and without minimum support requirement) remain polynomial for $1$-bribery and $\$$-bribery even when adding coalition considerations.  This extends to the coalitional setting the known results from the single party setting, though the algorithms are, at times, considerably more complex.  On the other hand, for swap-bribery and shift-bribery the problem becomes NP-hard even for Plurality when there is a minimum support requirement (for both CB and CBP).  The same problems are polynomial in the corresponding single party setting~\cite{put2016complexity}.  For Borda scoring, we show that without a minimum support requirement  $1$-bribery, $\$$-bribery, and shift-bribery are solvable in polynomial time, while swap-bribery is NP-hard.  Once a minimum support requirement is introduced - all variants become NP-hard.  

\paragraph{Proofs.} % Due to space limitations, 
Many proofs are deferred to the appendix, including most of the harness ones.  

\section{Related Work}

Several problems in the literature study methods to alter the outcome of an election, such as control (e.g \cite{put2016complexity,bartholdi1992hard}) and manipulation (e.g \cite{brelsford2008approximability,faliszewski2015complexity}).
In this paper, we focus on the bribery problem, introduced by \cite{faliszewski2006complexity}.
%, in which there is an election, and each voter has a price function that defines the cost of changing his preference order. 
%The question is whether a given budget is sufficient to modify the election outcome in the desired way.  
Various bribery price functions have been considered, and the complexity of the bribery problem depends on the type of bribery and the election's voting procedure. These research questions have been extensively studied (e.g.~\cite{bartholdi1992hard,tao2022hard,maushagen2022complexity,elkind2009swap,brelsford2008approximability,faliszewski2008Nonuniform,faliszewski2015complexity,parkes2012complexity}). See also~\cite{faliszewski2016control} for a survey.

\textbf{Complexity of the Bribery Problem.} 
%Several papers investigate the complexity of the bribery problem for different types of bribery and election systems. 
In the case of  Plurality it has been shown that the bribery problem is solvable in polynomial time for several types of bribery: $1$-bribery, $\$$-bribery \cite{faliszewski2006complexity},  non-uniform-bribery \cite{faliszewski2008Nonuniform} and swap-bribery \cite{elkind2009swap}.
For Borda, on the other hand, the bribery problem is NP-hard even for $1$-bribery~\cite{brelsford2008approximability}, and NP-completeness has been proven for other types of bribery, such as shift-bribery~\cite{elkind2009swap} by provide an NP algorithm.
 
%TODO: why our problem is different
%The complexity of our problems does not follow directly from these results since the bribery problem is different in parliamentary elections. 
%For example, see \ref{exm:elction_vs_parliament} in the appendix.

\textbf{Parliamentary Elections.}
Bribery in parliamentary elections was studied in~\cite{put2016complexity}.  They show that shift bribery is solvable in polynomial time under Plurality, for both the case with a minimum support threshold and without. 
For Borda scoring, they prove that the shift bribery problem is in P if there is no minimum support requirement, but is NP-hard with such a requirement. \cite{put2016complexity} however, only consider bribery for promoting a single party, and not sets of parties - which is the focus of our research. 

\textbf{Power Indexes in Weighted Voting Games.}
The research on manipulating power indexes (e.g. Shapley–Shubik or Banzhaf) in Weighted Voting Games (WVG) is also related, as the parliamentary process with parties and a quota is a WVG. \cite{Aziz2011False,rey2014false} study splitting and merging parties to increase the power indexes.   \cite{rey2018structural} study adding or omitting players to increase or decrease the power indexes.  All these problems are shown to be computationally hard. \cite{zick2011shapley} and \cite{zuckerman2012manipulating} study affecting the power index through changes in the quota.  Our work differs from this line in several ways, as we do not consider the power indexes, we consider voter bribery - not party structure changes, and we consider minimum support thresholds.        

\textbf{Strategic Voting and Coalition Preferences.}
Cox~\cite{cox2018portfolio} explores portfolio-maximizing strategic voting, demonstrating that voters often base their decisions not only on their preferences for an individual party but also on their intent to influence coalition outcomes and maximize governmental influence.
The topics of strategic and tactical coalition voting have been explored in numerous studies (e.g. \cite{meffert2010strategic,bowler2010strategic,gschwend2016drives,MCCUEN2010316}), examining how voters may choose to vote based on coalition considerations, even if that means not voting for their most preferred party.

\section{Model}
\label{sec:prelim}
A parliamentary election is a triplet $E=(C, V, {\mathcal O})$, where $C = \{c_1, \ldots, c_m\}$ is the set of political parties, $V= \{v_1,\ldots,v_n\}$ is the set of voters, and $\order=(\oi_1, \ldots, \oi_n)$ is a sequence of preference orders over the parties, where $\oi_i$ is the preference order of the voter $v_i$. 
We denote by $pos(O_i,c_j)$ the position of party $c_j$ in $\oi_i$.
%Extending the function to set $D$ of candidates, we define $pos(O_i,D)=\{ pos(O_i,c_j : c_j\in D\}$. 

The voters' preference orders determine the number of seats in parliament allotted to each party. In this work we focus on the \emph{fractions} of seats allotted to the parties, not the nominal number, and following~\cite{put2016complexity}, we assume that any fraction is possible, even if it represents a fractional number of parliament seats.  The fractions are determined using \emph{positional scoring functions}. 

A \textbf{positional scoring function} is a function, $\gamma(O_i,c_j)$, that assigns ``points'' according to rank; that is, order $O_i$ ``gives'' $\gamma(O_i,c_j)$ points to  party $c_j$. Extending the function to sets, we define $\gamma(\order',C')=\sum_{O_i\in \order'}\sum_{c_j\in C'}\gamma(O_i,c_j)$. 

We consider two standard positional scoring functions:
\begin{enumerate}
    \item \textbf{Plurality:} $\mbox{\it Plurality}(O_i,c_j)=1$ if $pos(O_i,c_j)=1$ and $0$ otherwise.  
%    \begin{align*}
%        \mbox{\it Plurality}(O_i,c_j)=\begin{cases}
%         1 &  \\
%         0 & \text{otherwise}
%        \end{cases}
%    \end{align*}
    \item \textbf{Borda:} ${Borda}(O_i,c_j) = m-pos(O_i,c_j)$, where $m$ is the total number of parties. 
\end{enumerate}
The fraction of seats allotted to each party is proportional to its number of points.  However, in order to avoid small parties, many parliamentary systems introduce a \emph{min size threshold} $t$, where parties that got less than a $t$ fraction of the total number of points are eliminated.  In this case  - where $\gamma(\order,c_j)< t\cdot\gamma(\order,C)$ - we say that $c_j$ is \emph{inactive (under $\order$)}, otherwise, it is \emph{active}. The votes for an active (inactive) party are said to be \emph{active} (res. \emph{inactive}).  In all, 
for positional scoring function $\gamma$ and min-size threshold $t$, the fraction of seats allotted to party $c_j$ is:
\vspace{-3ex}
\begin{align*}
    \seats{\gamma}{t}(\order,c_j)=\begin{cases}
        \frac{\gamma(\order,c_j)}{\sum_{c_{j'} \text{ is active }} \gamma(\order,c_{j'})} & c_j \text{ is active} \\
        0 & \text{otherwise}
    \end{cases}
\end{align*}
The function $\seats{\gamma}{t}$ is naturally extended to sets of parties: for $C'\subseteq C, \seats{\gamma}{t}(\order,C')=\sum_{c\in C'}\seats{\gamma}{t}(\order,c)$.

The special $t=0$ is, at times, simpler and will be treated separately.

\subsubsection*{Bribery}
Bribery allows for the modification of the voters' preference orders.  Technically, we equate a bribe with the resultant sequence of modified orders $\bbribe=(\sbribe_1,\ldots,\sbribe_m)$. The cost of changing voter $v_i$'s preference order from $O_i$ to $\sbribe_i$ is $\pi_i(\oi_i,\sbribe_i)$. The cost of the entire bribe is $\pi(\order,\bbribe)=\sum_{i=1}^n\pi_i(\oi_i,\sbribe_i)$.
For brevity, when clear from the context, we omit the original order $\sbribe_i$, writing $\pi_i(\hat{O}_i)$.

The literature considers several different bribery types, distinguished both by the permissible changes to voters' preferences and by the cost to do so. In this paper, we consider the following bribery types:
\begin{description}
    \item[$1$-bribery:] All changes to voters' preference orders are permissible. The cost function is $\pi^{1}_i(O_i,\sbribe_i)\equiv 1$, for all bribes and all voters.   
    \item[$\$$-bribery:]  All changes are permissible. For each $i$, there is a value $p_i$, with $\pi^{\$}_i(O_i,\sbribe_i)\equiv p_i$, for any $\sbribe_i$~\cite{faliszewski2006complexity}.
    \item[Swap-bribery:] All changes are permissible.  The cost depends on pairs of parties that swapped order between the preference orders~\cite{elkind2009swap}. Specifically, for parties $x,y$ there is a cost $sw_i(x,y)$ for moving $y$ from a location below $x$ to one above $x$. The total cost for $\sbribe_i$ is the aggregate \sloppy{$\pi^{swap}_i(O_i,\sbribe_i)=\sum_{x,y: \text{$y$ is below $x$ in $O_i$ and above $x$ in $\sbribe_i$}}sw_i(x,y)$}. 
    \item[Shift-bribery:] In the classic bribery setting, shift-bribery assumes one \emph{preferred candidate} $c_1$, and only its location can be shifted, %in the voters' preference orders
    and only upward. The cost is a function of the number of locations the preferred candidate is shifted, $\pi^{shift}_i(O_i,\hat{O}_i)=s_i(|\{ c_j : \text{the order of $c_1$ and $c_j$ is different in $O_i$ and $\sbribe_i$}\}|)$ for some monotone function $s_i$~\cite{elkind2009swap}. %The price function is denoted by $\pi_i(j)$. 

    Adapting the definition to the coalitional setting, in \textbf{coalition-shift-bribery} there is a \emph{preferred set} $A$ of parties (the coalition), and it is possible to shift the location of any $a\in A$ upwards (that is, for $x\in A$ and $y\not\in A$, it is forbidden that $x$ was above $y$ in $O_i$ but is below $y$ in $\hat{O}_i$).  
    The cost is 
    $\pi^{shift}_i(O_i,\hat{O}_i)=s_i(|\{ (x,y) : \text{$y$ is below $x$ in $O_i$ and above $x$ in $\sbribe_i$} \}|)$ for some monotone function $s_i$.
\end{description}

\subsection{Problem Definition}
We now provide the formal definitions of two problem variants.
In the first - CB - there is a subset $A$ of parties - the \emph{coalition} - which the briber aims to boost their total support.  In the second - CBP - there is, in addition, a \emph{preferred party} within the coalition, which the briber aims to boost its support relative to the coalition's overall support.        
\begin{definition}[Coalition-Bribery-Problem (CB)]~\newline
\underline{Given:} an election $E= (C,V,O)$, a coalition of parties $A\subseteq C$, a seat allocation function $\seats{\gamma}{t}$, a bribery type with price function $\pi$, 
 a budget $B$, and target support $\varphi \in [0,1]$, \newline
 \underline{Decide:} whether there exists a bribe $\bbribe$ such that 
(a) $\pi(\order,\bbribe) \leq B$ (does not exceed the budget), 
(b) $\seats{\gamma}{t}(\bbribe,A) \geq \varphi\cdot\seats{\gamma}{t}(\bbribe,C)$.
%(the support of the coalition is at least $\varphi$).
\end{definition}
\begin{definition}[Coalition-Bribery-with-Preferred-party-Problem (CBP)]~\newline
\underline{Given:} as in CB, and, in addition: (i) a \emph{preferred party} $c_{j_1}\in A$, and (ii) a target ratio $\rho \in [0,1]$, \newline
\underline{Decide:} whether there exists a bribe $\bbribe$ for which (a) and (b) hold, and in addition:
(c) $\seats{\gamma}{t}(\bbribe,c_{j_1})\geq \rho\cdot  \seats{\gamma}{t}(\bbribe,A)$ (the preferred party gets at least $\rho$ percentage among the coalition $A$).
\end{definition}
%These definitions also apply to a parliamentary choosing rule with threshold seat allocation, denoted by $R_t$. 
\paragraph{Notations.}
Without loss of generality, we assume that the preferred party (in CBP) is $c_1$, and $A = \{c_1,\ldots,c_h\}$.   We denote $A_{-1}=A\setminus\{c_1\}$ and $\bar{A}=C\setminus A$.  Also, $top(O_i)$ denotes the top party in $O_i$.

For brevity, we denote the CB problems associated with the functions $\seats{\text{Plurality}}{t}$ and $\seats{\text{Borda}}{t}$ by Plurality$_t$-CB and Borda$_t$-CB, respectively, and analogously, Plurality$_t$-CBP and Borda$_t$-CBP. 

%ADD SIMPLE EXAMPLE
\textbf{Examples.}
Consider a setting with 3 parties: X, Y, and Z.  The briber's preferred party is X, and the coalition of interest is $A=\{ \text{X, Y}\}$. The voting rule is plurality, and the target support for the coalition is $\varphi = 0.5$. Suppose there are 100 voters: 35 voting  X (that is, X is their top preference), 15 to Y, and 50 to Z. 

If there is no minimum size threshold, then all votes are active, the coalition gets 50\% of the seats, and no bribe is necessary. Suppose that the minimum size threshold is $t=20\%$. Then, with no bribe the votes to Y are inactive, and the coalition $A$ gets only a $35/(35+45)\approx 41\%$ fraction of the seats. Under 1-bribery, the optimal bribe to affect the target support is to bribe 5 voters of Z to vote for Y instead. This activates the votes of Y, for a total of $35+20=55$ for the coalition. % (versus $45$ for $\hat{A}$ - the opposition). 

Next, suppose the bribery type is \$-bribery, and bribing voters of Z costs \$2, while bribing those of X costs \$1.  Then, the least cost bribe to bring $A$ to the support level $\rho=0.5$ is to bribe 5 voters of X and have them vote for Y, in which case $A$  gets exactly 50\% of the seats. 
Finally, suppose that we are in the CBP setting, where there is both a target support $\varphi=0.5$ for the coalition, and a target ratio $\rho=61\%$ for X within the coalition. Then, transferring 5 seats from X to Y will bring the ratio of X within the coalition down to 60\%$<\rho$.  So, the optimal bribe is to bribe 3 voters of X and 2 voters of Z, all to vote for Y.  The seat allocation is then: 32 for X, 20 for Y, and 43 for Z.  The cost is $3\times \$1+2\times\$2=\$7$.

\section{Plurality}
\subsection{1-Bribery and \$-Bribery}
%\paragraph{Plurality$_t$-CBPP.}
We first provide a polynomial algorithm that solves the Plurality$_t$-CBP with \$-bribery.  
%This implies a polynomial algorithm also for 1-Bribery, and also for Plurality$_t$-CBP with \$-Bribery and 1-Bribery. 
%Recall that in \$-bribery the cost of bribing voter $v_i$ is some $p_i$, for any bribe thereof. 
The algorithm splits the bribing process into two: first considering the set of voters to bribe - and omitting their current votes from the pool of votes - and only afterward determining their new votes.   

Accordingly, we define the notion of an \textit{undetermined bribe}, which is the process wherein a set of voters is bribed, thus eliminating their original votes, but their new votes are not yet determined.  Technically, an undetermined-bribe is a set $\hat{V} =\{ v_{i_1},\ldots,v_{i_{\ell}}\}$ of (bribed) voters.  The cost of the undetermined bribe $\hat{V}$ is $\pi^{\$}(\hat{V})=\sum_{i:v_i\in \hat{V}}p_i$.  Note that under \$-bribery this cost suffices to later determine the votes of all members of $\hat{V}$ in any way one may wish. 

The key ingredient in the algorithm is a dynamic programming computation of a function $g$, such that $g(\ell, a_{\bar{A}},d,a_{A_{-1}})$ is the least cost of an undetermined-bribe $\hat{V}$ that satisfies all the following requirements:
\begin{itemize}
    \item $|\hat{V}| = \ell$,
    \item $top(O_i)\neq c_1$, for all $v_i\in \hat{V}$, % (voters of $\hat{V}$ did not vote for $c_1$,
    \item from the votes of $V\setminus \hat{V}$ (the not bribed voters), there are $a_{\bar{A}}$ active votes for parties of $\bar{A}$. 
    \item there is a way to distribute $d$ additional votes to parties of $A_{-1}$ in a way that, together with the votes of $V\setminus \hat{V}$, the total number of active votes for parties of $A_{-1}$ is $a_{A_{-1}}$.  
\end{itemize}
If there is no undetermined bribe that satisfies the above, then $g(\ell, a_{\bar{A}},d,a_{A_{-1}})=\infty$.

Let $mincost(c_j,\ell)$ be the least cost of bribing $\ell$ voters who in $\order$ voted for $c_j$ (with $\infty$ if there is no such bribe).  
\begin{lemma}
It is possible to compute $g(\ell, a_{\bar{A}},d, a_{A_{-1}})$, for all $0\leq \ell, a_{A_{-1}}, d, a_{\bar{A}}\leq n$, and $mincost(c_j,k)$, for $1\leq j\leq m, 1\leq \ell \leq n$ in polynomial time.
    \label{alg:f-i-j-k-c-Plurality-t-CBPP}
\end{lemma}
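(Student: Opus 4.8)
The plan is to compute both quantities by dynamic programming, leaning on one structural fact about Plurality: since every order contributes exactly one point, $\gamma(\order,C)=n$ no matter how we bribe, so a party is \emph{active} exactly when the number of voters topping it is at least the \emph{fixed} threshold $T:=t\cdot n$. This is what lets us treat activation as a hard numeric test that does not move as the bribe changes, and it is the observation the whole argument rests on.

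For $mincost$: under \$-bribery the cost of bribing $\ell$ of party $c_j$'s voters is the sum of the $\ell$ chosen prices $p_i$, which is minimized by taking the $\ell$ cheapest. So sort the voters topping each $c_j$ by price once and read off prefix sums, obtaining $mincost(c_j,\ell)$ for every $j$ and every $\ell\le n$ (and $\infty$ when $c_j$ has fewer than $\ell$ voters) in $O(mn\log n)$ total time.

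For $g$: process the parties $c_2,\dots,c_m$ one at a time (voters topping $c_1$ are never bribed, by the definition of $g$), maintaining a table indexed by $(j,\ell,a_{\bar A},d,a_{A_{-1}})$ whose value is the least \$-cost of an undetermined bribe that uses only voters topping $c_2,\dots,c_j$, bribes exactly $\ell$ of them, leaves exactly $a_{\bar A}$ active votes among the already-seen $\bar A$-parties, and admits a distribution of $d$ extra (replacement) votes to the already-seen $A_{-1}$-parties making their active total exactly $a_{A_{-1}}$; the $j=1$ row is $0$ at the all-zero index and $\infty$ elsewhere. To go from $c_{j-1}$ to $c_j$, let $n_j$ be the number of voters topping $c_j$. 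If $c_j\in\bar A$: for each $b\in\{0,\dots,n_j\}$, bribe its $b$ cheapest voters at cost $mincost(c_j,b)$, add $b$ to $\ell$, and add $n_j-b$ to $a_{\bar A}$ only if $n_j-b\ge T$. If $c_j\in A_{-1}$: for each $b\in\{0,\dots,n_j\}$ and each $e\in\{0,\dots,n\}$, bribe its $b$ cheapest voters at cost $mincost(c_j,b)$, add $b$ to $\ell$, add $e$ to $d$, and add $n_j-b+e$ to $a_{A_{-1}}$ only if $n_j-b+e\ge T$. Taking coordinatewise minima over all these choices produces the $c_j$ row, and the $c_m$ row is $g$. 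Correctness is a routine exchange argument: an undetermined bribe decomposes, party by party, into how many of that party's voters it bribes (cheapest is always optimal, since prices are per-voter) and, for $A_{-1}$-parties, how many replacement votes it routes there, and these per-party choices are independent.

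The table has $O(mn^4)$ cells and each is reached from $O(n^2)$ predecessor configurations, so the computation is $O(mn^6)$, hence polynomial. The one delicate point — which I expect to be the main thing to get right — is the $A_{-1}$ accounting: activation is all-or-nothing, so an $A_{-1}$-party below $T$ contributes $0$ to $a_{A_{-1}}$ however many replacement votes are (pointlessly) steered to it, whereas one at or above $T$ contributes its full post-bribe count; the transition above covers both cases, which is exactly why $g$ is phrased as ``there is a way to distribute $d$ votes'' rather than as a closed-form total. Everything else — the treatment of $c_1$, of the leftover $\ell-d$ replacement votes, and of the thresholds $\varphi$ and $\rho$ — lives outside this lemma.
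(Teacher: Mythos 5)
Your proposal is correct and takes essentially the same approach as the paper: a per-party decomposition in which each party contributes its number of bribed voters (taken cheapest-first via $mincost$) plus, for $A_{-1}$-parties, a share of the $d$ replacement votes, with the all-or-nothing threshold test $n_j-b(+e)\ge T=t\cdot n$ folded into a dynamic program. The only difference is cosmetic: you run a single DP sweeping $c_2,\dots,c_m$, while the paper runs two separate DPs ($f$ over subsets of $\bar{A}$ and $h$ over subsets of $A_{-1}$) and merges them with a final minimization over how $\ell$ splits between the two groups.
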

\begin{proofnoqed}
For party $c_j$, let $V_j=\{ v_i : top(O_i)=c_j\}$ - the set of voters who (before bribes) vote for $c_j$.  For a set of parties $D$, set $V_D=\cup_{c_j\in D}V_j$. 
For computing $mincost(c_j,\ell)$ it suffices to first sort the set $\{ p_i : v_i\in V_j\}$.  Then $mincost(c_j,\ell)$ is the sum of the $\ell$ least values in this set.

For $g$:
When seeking the least-cost undetermined bribe $\hat{V}$, we separately compute $\hat{V}_{A_{-1}}$ - the bribed voters who initially voted for $A_{-1}$ - and $\hat{V}_{\bar{A}}$ - the bribed voters who initially voted for $\bar{A}$.

\paragraph{Computing $\hat{V}_{\bar{A}}$.}
For a set of parties $D\subseteq \bar{A}$, let $f(D,\ell,a_D)$ be the least cost of an undetermined bribe $\hat{W}\subseteq V_D$ such that: (i) $|\hat{W}|=\ell$, (ii) based on the votes of $V_D\setminus \hat{W}$ exactly $a_D$ votes are active.  

We provide a dynamic programming process for computing $f(D,\ell,a_{\bar{A}})$ for all $D\subseteq \bar{A}$, and $0\leq \ell,a_{\bar{A}}\leq n$.  First, for $D$ of size 1:
\begin{align}
    f(\{j\},\ell,a_{\{ j\}})= mincost(c_j,\ell) \nonumber
\end{align}
if either $a_{\{j\}}=|V_j|-\ell\geq t$ or $|V_j|-\ell< t$ and $a_{\{j\}}=0$.  In all other cases, $f(\{j\},\ell,a_{\{ j\}})=\infty$. 

Now, 
\begin{align*}
    f(D\cup \{j\},\ell,a_{D\cup\{j\}})=
\;\; \min_{0\leq \ell'\leq \ell, 0\leq a_{D} \leq a_{D\cup \{ j\}}} \\
\{     f(D,\ell',a_D) +     f(\{ j\},\ell-\ell',a_{D\cup\{ j\}}-a_D) \} 
\end{align*}
This allows to iteratively compute $f(\bar{A},\ell,a_{\bar{A}})$, for all $\ell,a_{\bar{A}}$, by adding parties one by one.

\textbf{Computing $\hat{V}_{A_{-1}}$.}
For a set of parties  $D\subseteq A_{-1}$, let $h(D,\ell,d , a_{D})$ be the least cost of an undetermined bribe $\hat{W}\subseteq V_D$ such that: (i) $|\hat{W}|=\ell$, (ii) with $d$ additional votes, it is possible to obtain a total of $a_{D}$ active votes for parties of $D$.

Again, dynamic programming allows computing
$h(A_{-1},\ell,d,a_{A_{-1}})$ for all $0\leq \ell,d, a_{\bar{A}}\leq n$.  First, for $D$ of size 1, 
\begin{align*}
    h(\{j\},\ell,d,a_{\{ j\}})=mincost(c_j,\ell)
\end{align*}
if either one of the following holds: 
    \begin{itemize}
        \item $|V_j|-\ell+d\geq t$ and $a_{\{ j\}}=|V_j|-\ell+d$,
        \item $|V_j|-\ell+d< t$ and $a_{\{ j\}}=0$
    \end{itemize}
In all other cases, $h(\{j\},\ell,d, a_{\{ j\}})=\infty$.

Now, 
\begin{align*}
    h(D\cup \{j\},\ell,d,a_{D\cup\{j\}})=
\;\; \min_{0\leq \ell'\leq \ell, 0\leq d'\leq d, 0\leq a_{D} \leq a_{D\cup\{ j\}}}\\ \{     h(D,\ell',d',a_D) +  h(\{ j\},\ell-\ell',d-d',a_{D\cup\{ j\}}-a_D) \} 
\end{align*}
This allows to iteratively compute $h(A_{-1},\ell,d,a_{A_{-1}})$, for all $\ell,d,a_{A_{-1}}$, by adding parties one by one.

\paragraph{Computing $g$.}
Now, having computed $h(A_{-1},\ell,d,a_{A_{-1}})$ and $f(\bar{A},\ell,a_{\bar{A}})$ for all $\ell, d,a_{A_{-1}},a_{\bar{A}}$, we compute $g$:
\begin{align*}
    g(\ell,a_{\bar{A}},d,a_{A_{-1}}) =& \\
\;\; \min_{0\leq \ell'\leq \ell} \{  h(A_{-1},\ell',d,a_{A_{-1}})
+&f(\bar{A},\ell-\ell',a_{\bar{A}})\} \;\;\;\;\;\;\;\; \qed
\end{align*}
\end{proofnoqed}
\begin{algorithm}
    \begin{algorithmic}[1]
        \FORALL{$0\leq \ell, a_{\bar{A}},d, a_{A_{-1}} \leq n$}
            \STATE $B' = g(\ell, a_{\bar{A}},d, a_{A_{-1}})$
            %\STATE Let $S'$ be the corresponding semi-bribery. 
            %\STATE $E' = bribe(E,S')$           
            \IF{$B' > B$}
                \STATE Continue
            \ENDIF  
            \IF{$d> \ell$}
                \IF{$mincost(c_1, d-\ell) > B-B'$}
                    \STATE continue
                \ENDIF
            \ENDIF
            \STATE $a_{c_1}=\gamma(\mathcal{O}, c_1)+\ell-d$
            \IF{($a_{c_1} < T$) }
                \STATE $a_{c_1}=0$
            \ENDIF
            \IF{$\frac{a_{A_{-1}}+a(c_1)}{a_{A_{-1}}+a(c_1) + a_{\bar{A}}} \geq \varphi \text{ and }  \frac{a(c_1)}{a_{A_{-1}}+a(c_1)}\geq \rho$}
                \RETURN The budget is sufficient
            \ENDIF
        \ENDFOR
    \RETURN The budget is insufficient
    \end{algorithmic}
    \caption{Algorithm for Plurality$_t$-CBP with the $\$$-bribery}
    \label{alg:dolar-Plurality-t-CBPP}
\end{algorithm}
Given the efficient computation of $g$, Algorithm \ref{alg:dolar-Plurality-t-CBPP} solves Plurality$_{t}$-CBP by iterating over all possible combinations of $\ell, a_{\bar{A}},d, a_{A_{-1}}$, and for each, performing three steps:

\textbf{First step.}\textit{
Compute $g(\ell, a_{\bar{A}},d, a_{A_{-1}})$ and check if the budget suffices (lines 2-5).}

\textbf{Second step.}
\textit{Add $d$ bribed votes to support $A_{-1}$ and the rest to $c_1$ (lines 5-8).}
The bribe of $g$ provides $\ell$ ``free'' votes, while we are seeking to add $d$ votes for $A_{-1}$.  If $d>\ell$, then we need to add $d-\ell$ votes from $c_1$.  Lines 5-7 check if this is possible. Line 8 adds the remaining $\ell-d$ (which may be negative) to the support of $c_1$.

\textbf{Third step.} 
\textit{Check if the bribe is successful (lines 9-13)}
Lines 9-10 adjust the seats of $c_1$ according to the threshold. Finally, line 11 checks whether the bribe satisfies the required target support and target ratio ($\varphi$ and $\rho$). 
If yes, line 12 returns that the budget is sufficient and terminates. Otherwise, the algorithm continues to the next iteration.
If no valid bribe is found in any iteration, the algorithm returns that the budget is insufficient (line 13).  

We obtain:
\begin{theorem}
     Algorithm \ref{alg:dolar-Plurality-t-CBPP} solves Plurality$_t$-CBP for $\$$-bribery, in polynomial time.
     \label{thm:dolar-Plurality-t-CBPP}
\end{theorem}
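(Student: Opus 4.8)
The plan is to show Algorithm~\ref{alg:dolar-Plurality-t-CBPP} is correct and runs in polynomial time. Polynomiality is immediate from Lemma~\ref{alg:f-i-j-k-c-Plurality-t-CBPP}: the tables $g$ and $mincost$ are precomputed in polynomial time, the main loop ranges over $(\ell,a_{\bar{A}},d,a_{A_{-1}})\in\{0,\dots,n\}^4$, i.e. $O(n^4)$ iterations, and each iteration performs only table lookups and $O(1)$ arithmetic. It is worth recording upfront that under Plurality the total score is always $n$, so the activity threshold $T=tn$ is bribe-independent; this is what keeps the threshold bookkeeping (inside $g,f,h$ and in lines~9--10) consistent across iterations. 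So the real content is the correctness equivalence: the algorithm answers ``sufficient'' iff some bribe of cost $\le B$ satisfies (a)--(c).

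\textbf{Soundness.} Suppose line~12 fires at iteration $(\ell,a_{\bar{A}},d,a_{A_{-1}})$ with $B'=g(\ell,a_{\bar{A}},d,a_{A_{-1}})<\infty$. By the definition of $g$ there is an undetermined bribe $\hat W$ of $\ell$ voters none topping $c_1$, leaving $a_{\bar{A}}$ active $\bar{A}$-votes among $V\setminus\hat W$, and admitting a way to pour $d$ extra votes into $A_{-1}$ so that its active tally becomes $a_{A_{-1}}$. I would materialize a concrete bribe: take $\hat W$, and (only if $d>\ell$) additionally the $d-\ell$ cheapest voters topping $c_1$ — these exist and fit within $B-B'$ precisely because lines~5--8 did not skip (since $mincost(c_1,d-\ell)=\infty$ when fewer than $d-\ell$ such voters exist, we also get $a_{c_1}=\gamma(\order,c_1)+\ell-d\ge 0$ whenever line~11 is reached). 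Redirect the freed votes so $A_{-1}$ receives the witnessing $d$ votes and the leftover $\max(\ell-d,0)$ votes go to $c_1$; then the active tallies of $\bar{A}$, $A_{-1}$, $c_1$ are exactly $a_{\bar{A}}$, $a_{A_{-1}}$, and $\gamma(\order,c_1)+\ell-d$ (zeroed below $T$) — precisely the quantities line~11 tests, whose sum is the active denominator and on which the seat fractions reduce to the displayed ratios. Cost $B'+[d>\ell]\cdot mincost(c_1,d-\ell)\le B$ gives (a), and line~11 is literally (b) and (c); so this is a genuine witnessing bribe. Here $\$$-bribery is essential: cost depends only on which voters are bribed, so the new votes may be chosen freely.

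\textbf{Completeness.} I would start from an optimal valid bribe $\bbribe$ and normalize it. First, under $\$$-bribery bribing a voter without changing her top is wasteful, so assume every bribed voter changes top. Second, by an exchange argument: moving a freed vote into $\bar{A}$ (or moving a bribed $c_1$-voter into $\bar{A}$) never helps (b) or (c) and cannot de-activate $c_1$, so assume no bribed voter ends up topping a $\bar{A}$-party and every bribed $c_1$-voter ends up topping an $A_{-1}$-party. Now read off parameters: let $\ell$ be the number of bribed voters not originally topping $c_1$, of which $j$ end in $A_{-1}$ and $\ell-j$ in $c_1$; let $k_1$ be the number of bribed $c_1$-voters (all ending in $A_{-1}$); set $d=j+k_1$, and let $a_{\bar{A}},a_{A_{-1}}$ be the active $\bar{A}$- and $A_{-1}$-tallies under $\bbribe$. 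Then $\gamma(\bbribe,c_1)=\gamma(\order,c_1)+(\ell-j)-k_1=\gamma(\order,c_1)+\ell-d$, so the $c_1$-tally the algorithm forms at iteration $(\ell,a_{\bar{A}},d,a_{A_{-1}})$, and its threshold test, match the truth. The bribed non-$c_1$-voters form a legal witness for $g(\ell,a_{\bar{A}},d,a_{A_{-1}})$ (deleting them leaves $a_{\bar{A}}$ active $\bar{A}$-votes, and $\bbribe$'s $d$ injected $A_{-1}$-votes realize $a_{A_{-1}}$), so $g(\ell,a_{\bar{A}},d,a_{A_{-1}})$ is at most what $\bbribe$ spends on those voters; and $j\le\ell$ gives $d-\ell\le k_1$, so whenever $d>\ell$ the term $mincost(c_1,d-\ell)$ the algorithm adds is at most what $\bbribe$ spends on $c_1$-voters. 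Hence the algorithm's cost at this iteration is $\le\pi^{\$}(\order,\bbribe)\le B$, it does not skip, and line~11 accepts because $\bbribe$ satisfies (b),(c) with exactly these tallies — so the output is ``sufficient''.

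\textbf{Main obstacle.} The crux is the completeness direction: the normalization lemma (an optimal bribe may be assumed to have the restricted shape the algorithm implicitly enumerates — bribed $c_1$-voters all move to $A_{-1}$, nobody moves into $\bar{A}$), together with the two matching cost inequalities $g(\ell,a_{\bar{A}},d,a_{A_{-1}})\le(\text{non-}c_1\text{ spend})$ and $\max(0,d-\ell)\le k_1$, which let a single iteration ``dominate'' $\bbribe$ in cost while reproducing its active tallies. Soundness and the polynomial bound are routine once Lemma~\ref{alg:f-i-j-k-c-Plurality-t-CBPP} is in hand. Throughout, one adopts the standard conventions for the degenerate cases in which the coalition, or all parties, have zero active score.
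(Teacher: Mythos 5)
Your proposal is correct and takes essentially the same route as the paper: it treats Algorithm~\ref{alg:dolar-Plurality-t-CBPP} together with Lemma~\ref{alg:f-i-j-k-c-Plurality-t-CBPP} (the polynomial computation of $g$ and $mincost$) as the core, and verifies the enumeration over $(\ell,a_{\bar{A}},d,a_{A_{-1}})$. The paper states the theorem with only the informal three-step walkthrough of the algorithm, so your explicit soundness and completeness arguments (in particular the normalization of an optimal bribe so that no bribed voter ends in $\bar{A}$ and bribed $c_1$-voters end in $A_{-1}$, and the cost-domination inequalities $g(\ell,a_{\bar{A}},d,a_{A_{-1}})\le$ non-$c_1$ spend and $\max(0,d-\ell)\le k_1$) simply make rigorous what the paper leaves implicit, and they check out.
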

%The proof is technical and omitted. 
Since $1$-bribery is a special case of $\$$-bribery, with $p_i = 1$ for all voters, and Plurality$_t$-CB is a special case of Plurality$_t$-CBP obtained by setting $\rho = 0$, we obtain:
% \begin{corollary}
%     Plurality$_t$-CBPP can be solved in polynomial time for $1$-bribery.
%     \label{cor:1-plurality-t-ccbp}
% \end{corollary}
% Similarly, 
\begin{corollary}
    Plurality$_t$-CB can be solved in polynomial time for $1$-bribery and $\$$-bribery.
    \label{cor:dolar-Plurality-t-CBP-plurality-CBP}
\end{corollary}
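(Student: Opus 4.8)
The plan is to obtain Corollary~\ref{cor:dolar-Plurality-t-CBP-plurality-CBP} from Theorem~\ref{thm:dolar-Plurality-t-CBPP} by two elementary specializations, so the ``proof'' is really a pair of reduction observations. First, I would note that $1$-bribery is nothing but $\$$-bribery with all prices set to $p_i = 1$: the cost function $\pi^{1}_i\equiv 1$ coincides with $\pi^{\$}_i$ when $p_i=1$, and the permissible moves are the same (all changes). Hence every Plurality$_t$-CB (and Plurality$_t$-CBP) instance under $1$-bribery is, verbatim, an instance under $\$$-bribery, and it suffices to solve the $\$$-bribery case.

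Second, I would show that Plurality$_t$-CB is the special case of Plurality$_t$-CBP in which condition (c) is vacuous. Given a CB instance $(E,A,\seats{\text{Plurality}}{t},\pi^{\$},B,\varphi)$, designate an arbitrary party $c_{j_1}\in A$ as the preferred party and set the target ratio $\rho=0$. Then condition (c), namely $\seats{\text{Plurality}}{t}(\bbribe,c_{j_1})\ge 0\cdot\seats{\text{Plurality}}{t}(\bbribe,A)$, holds for every bribe $\bbribe$, so a bribe satisfies (a), (b), (c) for the constructed CBP instance if and only if it satisfies (a), (b) for the original CB instance. Consequently, running Algorithm~\ref{alg:dolar-Plurality-t-CBPP} on the constructed instance decides the CB instance, and by Theorem~\ref{thm:dolar-Plurality-t-CBPP} it runs in polynomial time; the construction itself is trivially polynomial.

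There is essentially no obstacle here; the only point meriting a word of care is the degenerate case $A=\emptyset$, in which there is no party to designate as $c_{j_1}$. This case can be disposed of directly: with $A=\emptyset$ the constraint (b) reads $0\ge \varphi\cdot\seats{\text{Plurality}}{t}(\bbribe,C)$, whose truth value is independent of the bribe (it holds iff $\varphi=0$ or $C$ yields no active votes, neither of which a bribe can help), so the answer is computed with no bribe at all. With this aside handled, the two specializations above complete the argument.
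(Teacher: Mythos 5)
Your proposal is correct and follows essentially the same route as the paper, which derives the corollary in one line by observing that $1$-bribery is $\$$-bribery with $p_i=1$ and that Plurality$_t$-CB is the $\rho=0$ special case of Plurality$_t$-CBP solved by Theorem~\ref{thm:dolar-Plurality-t-CBPP}. Your extra care about the degenerate case $A=\emptyset$ is a harmless refinement the paper omits.
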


\subsection{Swap-Bribery and Coalition-Shift-Bribery without Threshold}
%\subsubsection{Plurality-CBPP$_0$ and Plurality-CBP$_0$}
%The graph structure used in the proof of Theorem~\ref{thm:Plurality-ccbp} follows the proof of Theorem 3.1 from~\cite{faliszewski2008Nonuniform}.
We show that Plurality$_{0}$-CBP can be solved in polynomial time by solving $O(n^2)$ instances of the minimum-cost flow (MCF) problem, defined as follows.
\begin{definition}[Minimum-Cost Flow (MCF)]
    Given a directed graph $G=(U, D)$, source and sink nodes $s,t\in U$,
    %where $U$ is a set of vertices and $D$ is a set of directed edges, 
    non-negative capacity and cost functions $cap(e)$ $cost(e)$ for each $e \in D$, and demand $d$, 
    the goal is to find a flow $f: D \rightarrow \mathbb{R}$  that minimizes the weighted cost $\sum_{e\in D}f(e)cost(e)$, while satisfying: (i) capacity constraint: $0\leq f(e)\leq cap(e)$ for all $e\in D$, and (ii) flow conservation: 
    $\sum_{u\in V}f(u,v)= \sum_{u\in V}f(v,u)$, for $v\in V\setminus \{ s,t\}$, (iii) demand fulfilled: $\sum_{u\in V}f(s,u)= \sum_{u\in V}f(u,t)=d$.  
\end{definition}
The MCF problem can be solved in polynomial time (see e.g.~\cite{ahuja1993network}).

We show how, given $k_1,k_{A_{-1}}$, to construct an MCF instance $M_{k_1,k_{A_{-1}}}$ that admits a solution with weighted cost $B'$ iff there exists a bribe $\bbribe$ with cost $B'$ such that $Plurality(\bbribe,c_1)=k_1$ and $Plurality(\bbribe,A_{-1})=k_{A_{-1}}$. 

For each $i$, let $\hat{O}_{i,c_1}$ be the bribe of $v_i$ that brings $c_1$ to the top position (without moving any other parties).  Similarly, let $\hat{O}_{i,A_{-1}}$ 
and $\hat{O}_{i,\bar{A}}$ are the minimal cost bribes of $v_i$ that bring a member of $A_{-1}$ and of $\bar{A}$ to the first position, respectively.  Note that $\hat{O}_{i,\cdot}$ may be $O_i$, if no bribe is necessary. Also, for determining $\hat{O}_{i,A_{-1}}$ and $\hat{O}_{i,\bar{A}}$ we only need to consider bribes that bring a single party to the top position, with no other changes. So, all these bribes, together with their costs, can be computed in polynomial time for both Swap-Bribery and Coalition-Shift-Bribery.

The graph, capacities, and costs in $M_{k_1,k_2}$ are defined as follows (see Figure \ref{fig:Plurality-ccbp}.  The graph nodes are:  
\begin{itemize}
    \item One source node $s$ and one target node $t$.
    \item three nodes $a_{1},a_{A_{-1}}$ and $a_{\bar{A}}$.
    \item for each voter $v_i$ four nodes: $v_i,u_{i,1}, u_{i,A_{-1}}$ and $u_{i,\bar{A}}$.
\end{itemize} 
The edges, capacities, costs, and demand are:
\begin{itemize}
    \item edges $(s,a_{1}), (s,a_{A_{-1}}), (s,a_{\bar{A}})$, all with cost 0 and capacities  $cap(s,a_{1})=k_1,cap(s,a_{A_{-1}})=k_2,cap(s,a_{\bar{A}})=n-k_1-k_2$.  
    \item for each $i$, edges $(a_{1},u_{i,1}), (a_{A_{-1}},u_{i,A_{-1}})$ and $(a_{\bar{A}},u_{i,\bar{A}})$, all with capacity $1$ and cost $0$.
    \item for each $i$,  edges $(u_{i,\alpha},v_i)$ for $\alpha\in \{ 1,A_{-1},\bar{A}\}$, with $cost(u_{i,\alpha},v_i)=\pi_i(O_i,\hat{O}_{i,\alpha})$, and capacity 1.
    \item for each $i$,  edge $(v_i,t)$ with cost 0 and capacity 1.
    \item the demand is $d=n$.
\end{itemize}
\begin{figure}[tbhp]
    \centering
    \vspace{-0ex}
    \includegraphics[width=\linewidth]{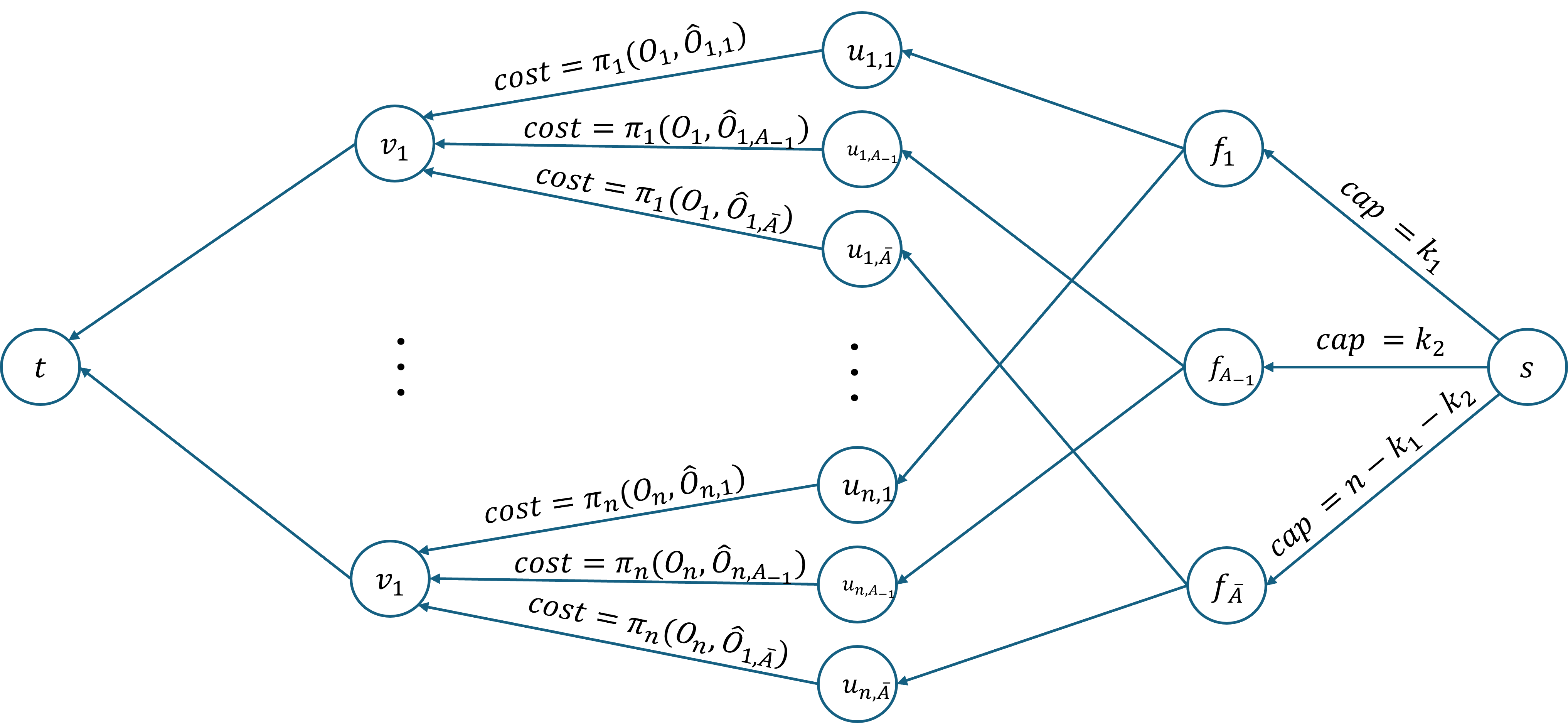}
    \caption{Network for proof of Theorem \ref{thm:Plurality-ccbp}. Unless otherwise stated, the default values are $cost = 0$ and $cap = 1$}
   
    \label{fig:Plurality-ccbp}
\end{figure}
\begin{lemma}
\label{lem:MCF}
    $M_{k_1,k_{A_{-1}}}$  admits a solution with weighted cost $B'$ iff there exists a bribe $\bbribe$ with cost $B'$ such that $Plurality(\bbribe,c_1)=k_1$ and $Plurality(\bbribe,A_{-1})=k_{A_{-1}}$.
\end{lemma}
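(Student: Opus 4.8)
The plan is to prove the two directions of the equivalence separately, with the main work being the ``only if'' direction where we must turn an arbitrary optimal flow into a bribe.

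\textbf{From a bribe to a flow.} Suppose $\bbribe$ is a bribe with $\pi(\order,\bbribe)=B'$, $Plurality(\bbribe,c_1)=k_1$, and $Plurality(\bbribe,A_{-1})=k_{A_{-1}}$ (hence $Plurality(\bbribe,\bar A)=n-k_1-k_{A_{-1}}$, since under Plurality every voter contributes exactly one point). For each voter $v_i$, let $\alpha(i)\in\{1,A_{-1},\bar A\}$ record which block $top(\sbribe_i)$ belongs to. Because moving a single party to the top is the cheapest way (under both Swap- and Coalition-Shift-Bribery) to make that party the top choice, we have $\pi_i(O_i,\sbribe_i)\ge \pi_i(O_i,\hat O_{i,\alpha(i)})$. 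Now route one unit of flow along the path $s\to a_{\alpha(i)}\to u_{i,\alpha(i)}\to v_i\to t$ for each $i$. This is a feasible integral flow: each $(v_i,t)$ and each $(u_{i,\alpha},v_i)$ carries at most one unit, and the edge $(s,a_1)$ carries exactly $|\{i:\alpha(i)=1\}|=k_1$ units, which is its capacity, and similarly for the other two block-edges; the total flow out of $s$ is $n=d$. Its weighted cost is $\sum_i \pi_i(O_i,\hat O_{i,\alpha(i)})\le \sum_i \pi_i(O_i,\sbribe_i)=B'$, so the MCF instance admits a solution of cost at most $B'$.

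\textbf{From a flow to a bribe.} Conversely, take any feasible flow of weighted cost $B'$; by integrality of MCF with integer capacities we may assume it is integral, so it decomposes into $n$ unit $s$--$t$ paths, each necessarily of the form $s\to a_\alpha\to u_{i,\alpha}\to v_i\to t$ (the only way into $v_i$ is through some $u_{i,\alpha}$, the only way into $u_{i,\alpha}$ is from $a_\alpha$, and the capacity-1 edge $(v_i,t)$ forces exactly one such path per $i$). Define $\sbribe_i=\hat O_{i,\alpha}$ for the block $\alpha$ whose edge carries $v_i$'s unit. Then $\pi(\order,\bbribe)=\sum_i\pi_i(O_i,\hat O_{i,\alpha})=B'$, matching the flow's cost edge-for-edge. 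The capacity constraints on $(s,a_1)$ and $(s,a_{A_{-1}})$ together with the fact that the demand $n$ saturates all three block-edges (their capacities sum to exactly $n$) force exactly $k_1$ voters to have $top(\sbribe_i)=c_1$ and exactly $k_{A_{-1}}$ voters to have their top in $A_{-1}$, i.e.\ $Plurality(\bbribe,c_1)=k_1$ and $Plurality(\bbribe,A_{-1})=k_{A_{-1}}$.

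\textbf{Main obstacle.} The routine parts are the path-decomposition bookkeeping and the capacity accounting; the one point that needs care is the inequality $\pi_i(O_i,\sbribe_i)\ge \pi_i(O_i,\hat O_{i,\alpha(i)})$ used in the forward direction, i.e.\ justifying that among all bribes of $v_i$ making some party of block $\alpha$ the top choice, the cheapest is the ``pure shift'' $\hat O_{i,\alpha}$. For Swap-Bribery this follows because any bribe realizing $top(\sbribe_i)\in\alpha$ must in particular perform the swaps that lift some fixed such party above everyone originally above it, and swap costs are additive and nonnegative, so dropping all other swaps cannot increase cost; for Coalition-Shift-Bribery one argues similarly using monotonicity of $s_i$ in the number of overtaken parties. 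Making this ``exchange/truncation'' argument precise — and noting it is exactly why we only needed to precompute single-party-to-top bribes — is the crux; everything else is a standard flow argument. I would also remark that the cost of $M_{k_1,k_{A_{-1}}}$ equals the minimum over bribes with the prescribed Plurality scores, and that iterating over the $O(n^2)$ choices of $(k_1,k_{A_{-1}})$ and checking conditions (a)--(c) on the resulting seat fractions yields the claimed polynomial-time algorithm for Plurality$_0$-CBP.
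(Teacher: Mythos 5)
Your proof is correct and follows essentially the same route as the paper: the flow-to-bribe direction via integrality, saturation of the three source edges, and reading off $\hat O_{i,\alpha}$ from the unique unit entering each $v_i$, and the bribe-to-flow direction by routing each voter through the block of its new top party. The only difference is that you spell out the converse direction (which the paper dismisses as ``similar''), including the truncation/monotonicity argument that the pure single-party-to-top bribe is never more expensive, and you correctly read the cost matching as ``at most $B'$'' there, which is exactly what Theorem \ref{thm:Plurality-ccbp} needs.
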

\begin{proof}
Let $f$ be a flow that solves $M_{k_1,k_{A_{-1}}}$ with cost $B'$.  We construct a bribe $\bbribe$ with the same costs as claimed by the lemma. Since all inputs are integral we may assume that $f$ is integral.  Since the demand is fulfilled, and $cap(v_i,t)=1$, it must be that $f(v_i,t)=1$ for all $i$.  Hence, for each $i$, there exists exactly one $\alpha_i\in \{ 1,A_{-1},\bar{A} \}$ with $f(u_{i,\alpha_i},v_i)=1$.  So, define $\hat{O}_i=\hat{O}_{i,\alpha_i}$. By construction, the cost of this bribe is $\pi_i(O_i,\hat{O}_{i,\alpha_i})$, and the total cost of the entire bribe is $\sum_{i=1}^{n}\pi_i(O_i,\hat{O}_{i,\alpha_i})$ - which is also the weighted cost the flow $f$. Now, since the demand $d=n$ is fulfilled, $f(s,a_1)=k_1$ and $f(s,A_{-1})=k_{A_{-1}}$. So, there are exactly $k_1$ (res. $k_{A_{-1}}$) indexes $i$, with $f(a_{1},u_{i,1})=1$ (res. $f(a_{A_{-1}},u_{i,A_{-1}})=1$).  So, $Plurality(\bbribe,c_1)=k_1$ and $Plurality(\bbribe,A_{-1})=k_{A_{-1}}$.

The proof of the other direction is similar.
\end{proof}
\begin{theorem}
    There is a polynomial-time algorithm that solves Plurality$_{0}$-CBP for swap-bribery and coalition-shift-bribery.
    \label{thm:Plurality-ccbp}
\end{theorem}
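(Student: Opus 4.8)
The plan is to reduce Plurality$_{0}$-CBP to $O(n^2)$ minimum-cost-flow computations using Lemma~\ref{lem:MCF}. First I would precompute, for every voter $v_i$ and every $\alpha\in\{1,A_{-1},\bar{A}\}$, the minimal-cost bribe $\hat{O}_{i,\alpha}$ together with its cost; as noted before the statement, this takes polynomial time for both swap-bribery and coalition-shift-bribery. Then, for each pair $(k_1,k_{A_{-1}})$ with $0\le k_1$, $0\le k_{A_{-1}}$, and $k_1+k_{A_{-1}}\le n$ (the last restriction is needed so that the capacity $n-k_1-k_{A_{-1}}$ is non-negative), I would construct the network $M_{k_1,k_{A_{-1}}}$ and compute its minimum weighted cost $B^{\star}(k_1,k_{A_{-1}})$; this is finite, since $M_{k_1,k_{A_{-1}}}$ always admits a feasible flow of value $n$. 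By Lemma~\ref{lem:MCF}, $B^{\star}(k_1,k_{A_{-1}})$ is exactly the least cost of a bribe $\bbribe$ with $Plurality(\bbribe,c_1)=k_1$ and $Plurality(\bbribe,A_{-1})=k_{A_{-1}}$.

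The observation that makes this outer enumeration sufficient is that when $t=0$ every party is active (the inactiveness condition $\gamma(\order,c_j)<t\cdot\gamma(\order,C)=0$ never holds), and $\sum_{c_{j'}}Plurality(\order,c_{j'})=n$ for any profile. Hence for any bribe $\bbribe$, writing $k_1=Plurality(\bbribe,c_1)$ and $k_{A_{-1}}=Plurality(\bbribe,A_{-1})$, we have $\seats{\text{Plurality}}{0}(\bbribe,C)=1$, $\seats{\text{Plurality}}{0}(\bbribe,A)=(k_1+k_{A_{-1}})/n$, and $\seats{\text{Plurality}}{0}(\bbribe,c_1)=k_1/n$. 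Therefore conditions (b) and (c) of CBP hold for $\bbribe$ iff $k_1+k_{A_{-1}}\ge\varphi n$ and $k_1\ge\rho\,(k_1+k_{A_{-1}})$; i.e., (b) and (c) depend on $\bbribe$ only through the pair $(k_1,k_{A_{-1}})$, while (a) depends on $\bbribe$ only through its cost. The algorithm then returns ``yes'' iff there is a pair $(k_1,k_{A_{-1}})$ (with $k_1,k_{A_{-1}}\in\{0,\dots,n\}$, $k_1+k_{A_{-1}}\le n$) satisfying these two inequalities and with $B^{\star}(k_1,k_{A_{-1}})\le B$.

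For correctness: if the algorithm answers ``yes'' for some pair, Lemma~\ref{lem:MCF} gives a bribe $\bbribe$ of cost at most $B$ realizing that pair, and by the displayed identities $\bbribe$ satisfies (a), (b), (c); conversely, any bribe solving the CBP instance induces a pair $(k_1,k_{A_{-1}})=(Plurality(\bbribe,c_1),Plurality(\bbribe,A_{-1}))\in\{0,\dots,n\}^2$ with $k_1+k_{A_{-1}}\le n$ that satisfies the two inequalities, and its cost is an upper bound on $B^{\star}(k_1,k_{A_{-1}})$, so the algorithm finds it. For the running time: there are $O(n^2)$ pairs; each network $M_{k_1,k_{A_{-1}}}$ has $O(n)$ vertices and $O(n)$ edges and is solved in polynomial time; the precomputation of the $\hat{O}_{i,\alpha}$ is polynomial. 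Hence the whole procedure runs in polynomial time.

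I do not expect a genuine obstacle here: the substantive work is already packaged in Lemma~\ref{lem:MCF}, and what remains is the bookkeeping that, at $t=0$, the seat fractions are determined by $(k_1,k_{A_{-1}})$ alone, so that conditions (b) and (c) become two linear inequalities in these two integer parameters and the whole feasible region can be scanned exhaustively. The only points needing a line of care are that $M_{k_1,k_{A_{-1}}}$ is always feasible (so $B^{\star}$ is well defined) and that the enumeration must be restricted to $k_1+k_{A_{-1}}\le n$.
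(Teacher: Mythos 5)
Your proposal is correct and follows essentially the same route as the paper: enumerate the $O(n^2)$ pairs $(k_1,k_{A_{-1}})$, solve the min-cost-flow instance $M_{k_1,k_{A_{-1}}}$ for each, and invoke Lemma~\ref{lem:MCF} to translate between flows and bribes, with conditions (b) and (c) reduced to the two inequalities in $(k_1,k_{A_{-1}})$. Your write-up is in fact slightly more careful than the paper's (spelling out feasibility of the network, the $t=0$ seat-fraction identities, and the restriction $k_1+k_{A_{-1}}\le n$), but there is no substantive difference in approach.
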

\begin{proof}
For each $0\leq k_1, k_{A_{-1}}\leq n$, $k_1+k_{A_{-1}} \leq n$, such that $\frac{k_1+k_2}{n} \geq \varphi$ and $\frac{k_1}{k_1+k_2} \geq \rho$, construct the MFC instance $M_{k_1,k_2}$.  Solve $M_{k_1,k_2}$.  If the solution is with cost $\leq B$, then, by Lemma \ref{lem:MCF}, there is also a bribe $\bbribe$ with the same cost, which solves the associated CBP$_{0}$ instance. Hence, return ``success''. If none of the $M_{k_1,k_{A_{-1}}}$ admit a solution with cost $\leq B$, then there is no solution to the CBP$_{0}$ problem, and return ``failure''. The number of pairs $k_1,k_{A_{-1}}$ to check is $O(n^2)$. 
\end{proof}
Since Plurality$_{0}$-CB is a special case of Since Plurality$_{0}$-CBP we obtain:
\begin{corollary}
    Plurality$_{0}$-CB can be solved in polynomial time for swap-bribery and coalition-shift-bribery.
    \label{cor:Plurality-cbp}
\end{corollary}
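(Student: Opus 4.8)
The plan is to obtain Corollary~\ref{cor:Plurality-cbp} as an immediate consequence of Theorem~\ref{thm:Plurality-ccbp} via a trivial reduction of Plurality$_{0}$-CB to Plurality$_{0}$-CBP. Given an instance of Plurality$_{0}$-CB --- an election $E=(C,V,\order)$, a coalition $A\subseteq C$, a bribery type (swap or coalition-shift) with price function $\pi$, a budget $B$, and target support $\varphi$ --- I would build an instance of Plurality$_{0}$-CBP by keeping $E$, $A$, $\pi$, $B$, $\varphi$ unchanged, picking an arbitrary party $c_1\in A$ as the preferred party, and setting the target ratio $\rho=0$. (If $A=\emptyset$ the CB instance is degenerate and trivially decidable, so assume $A\neq\emptyset$, which guarantees such a $c_1$ exists.)

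Next I would observe that with $\rho=0$ the additional CBP constraint (c), namely $\seats{\text{Plurality}}{0}(\bbribe,c_1)\geq \rho\cdot\seats{\text{Plurality}}{0}(\bbribe,A)=0$, is satisfied by every bribe $\bbribe$, since seat fractions are always non-negative. Consequently a bribe $\bbribe$ satisfies conditions (a), (b), (c) of the constructed CBP instance if and only if it satisfies conditions (a), (b) of the original CB instance, so the two instances are equivalent (as ``yes''/``no'' instances). The reduction is computable in constant time, and Theorem~\ref{thm:Plurality-ccbp} supplies a polynomial-time algorithm for Plurality$_{0}$-CBP for both swap-bribery and coalition-shift-bribery; composing the two gives a polynomial-time algorithm for Plurality$_{0}$-CB for both bribery types.

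I do not expect a genuine obstacle here: the only points requiring any care are the degenerate empty-coalition case (handled above) and the fact that the argument applies verbatim to both bribery types --- which it does, since Theorem~\ref{thm:Plurality-ccbp} already covers both. As an alternative one could note directly that the network $M_{k_1,k_{A_{-1}}}$ from the proof of Theorem~\ref{thm:Plurality-ccbp}, iterated over the pairs $(k_1,k_{A_{-1}})$ satisfying only $\tfrac{k_1+k_{A_{-1}}}{n}\geq\varphi$ (dropping the ratio filter), already decides CB; but the black-box reduction via $\rho=0$ is the cleaner route.
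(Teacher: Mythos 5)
Your reduction is correct and is exactly how the paper obtains this corollary: Plurality$_{0}$-CB is treated as the special case of Plurality$_{0}$-CBP with $\rho=0$ (so condition (c) is vacuous), and Theorem~\ref{thm:Plurality-ccbp} is then invoked for both bribery types. No gaps; the extra care about the empty coalition and the alternative direct use of $M_{k_1,k_{A_{-1}}}$ are fine but not needed.
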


\subsection{Swap-Bribery and Coalition-Shift-Bribery with Threshold}
%\subsubsection{Plurality$_t$-CBP and Plurality$_t$-CBPP}
We show that Plurality$_t$-CB is NP-hard for both swap-bribery and coalition-shift bribery.
This holds even when the price function for coalition-shift bribery is restricted to the number of shifts multiplied by a fixed value for each voter.

For coalition-shift-bribery, the proof is by reduction from the 3-4-Exact-Cover% in
~\cite{brelsford2008approximability}.

\begin{theorem}
     Plurality$_t$-CB with coalition-shift-bribery is NP-hard, even when the price function is such that all $s_i(x)$'s are multiplicative.
    \label{thm:shift-Plurality-t-CBP}
\end{theorem}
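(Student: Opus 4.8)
The plan is to reduce from 3-4-Exact-Cover \cite{brelsford2008approximability}: given a universe $X=\{x_1,\dots,x_{3q}\}$ and a family $\mathcal S=\{S_1,\dots,S_n\}$ of three-element subsets of $X$, each element occurring in at most four of them, decide whether some $q$ of the sets partition $X$. From such an instance I build a Plurality$_t$ election. For every element $x_j$ there is a non-coalition party $d_j$; for every set $S_i$ there is a coalition party $a_i\in A$; and (optionally) a large anchor party $w\in A$. For each incidence $x_j\in S_i$ there is a voter $v_{i,j}$ whose ballot places $d_j$ first, $a_i$ second, then a fixed block of $3q$ dummy non-coalition parties, and only below that block all remaining coalition parties. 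Dummy voters calibrate the initial Plurality scores: each $d_j$ gets exactly $4$ points (its at most four incidence voters plus padding), each $a_i$ gets exactly $1$, each dummy non-coalition party gets $4$, and $w$ gets $W$. The threshold $t$ is chosen so that a party is active iff it has at least $4$ points; since under Plurality the total score is the constant $n$, this is an absolute count. The per-shift cost is $1$ for every incidence voter and larger than the budget for every dummy voter, so in particular all $s_i$ are multiplicative, as the statement requires. Finally set the budget $B=3q$ and target support $\varphi=\frac{W+4q}{W+4q+D}$, where $D=12q$ is the never-changing total score of the dummy non-coalition parties; if $w$ is omitted, $\varphi=\tfrac14$.

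The construction is engineered to be rigid. The top of voter $v_{i,j}$ can change only by raising $a_i$ to first place (one shift, cost $1$): every other coalition party sits below at least $3q+1$ non-coalition parties and cannot be raised within budget, and non-coalition parties can only move down under coalition-shift. Such a move adds one point to $a_i$ and removes one from $d_j$. Consequently a bribe of cost at most $B$ changes the top of at most $3q$ voters; a party $d_j$ turns inactive as soon as one of its incidence voters is bribed, whereas $a_i$ (starting from $1$) turns active only when all three incidence voters of $S_i$ are bribed toward it. Hence, as far as the seat fractions are concerned, a bribe is captured by the collection $\mathcal C$ of fully bribed sets and by the number $r$ of inactivated element parties, and the coalition's seat fraction equals $\frac{W+4|\mathcal C|}{(W+4|\mathcal C|)+D+4(3q-r)}$.

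For one direction: if $\mathcal C$ is an exact cover of $X$, bribe the three incidence voters of each $S_i\in\mathcal C$ toward $a_i$; this costs exactly $3q=B$, activates precisely those $q$ parties $a_i$, inactivates every $d_j$ (each element is touched exactly once), and gives seat fraction exactly $\varphi$. Conversely, take any bribe of cost at most $B$ reaching support $\varphi$ and put $c^\ast=|\mathcal C|$. The budget gives $c^\ast\le q$ and of course $r\le 3q$, and a short calculation shows that $\frac{W+4c^\ast}{(W+4c^\ast)+D+4(3q-r)}\ge\varphi$ is equivalent to $3q-r+(q-c^\ast)\tfrac{D}{W+4q}\le 0$. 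Both summands are non-negative, so $r=3q$ and $c^\ast=q$. Then the whole budget is spent on $3q$ incidence voters all lying in the $q$ sets of $\mathcal C$, and $r=3q$ forces these $q$ three-element sets to be pairwise disjoint --- i.e.\ an exact cover.

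I expect the main work to be establishing the rigidity claim in full: arguing that every budget-feasible bribe is, up to inconsequential reshuffling, of the canonical form (each bribed voter raises its own $a_i$ to the top), so that its effect on the outcome really is summarized by the pair $(\mathcal C,r)$, and then choosing all numerical parameters so that the feasibility inequality collapses exactly to $3q-r+(q-c^\ast)\tfrac{D}{W+4q}\le 0$, whose two non-negative terms pin down an exact cover. A routine but necessary secondary task is to verify that the number of dummy parties and voters, the admissible window for $t$, and the size and rationality of $\varphi$ all stay polynomial and mutually consistent, and that the argument uses only single shifts, so it survives the restriction to multiplicative price functions.
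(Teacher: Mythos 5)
Your reduction is correct, and it is the same kind of argument the paper uses -- a reduction from an exact-cover problem with bounded occurrences in which the election threshold makes a coalition party active only if \emph{all} voters associated with its set are bribed to top-rank it, and a tight budget forces the chosen sets to be disjoint -- but the gadget is genuinely different. The paper keeps one voter per \emph{element}, whose ballot has the three set-parties containing that element in positions $2$--$4$ (everything else blocked by $3n{+}1$ intermediate parties), a single non-coalition ``sink'' party that every voter initially tops, and $n$ unbribable padding voters; with $\varphi=\tfrac12$ the coverage condition falls out purely from counting activated votes (each active set-party contributes exactly $4$, only $n$ element voters exist), and no deactivation argument is needed. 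You instead use one voter per \emph{incidence}, introduce a non-coalition party per element whose deactivation encodes coverage, and calibrate $\varphi$ algebraically so that feasibility collapses to $(3q-r)+(q-c^\ast)\tfrac{D}{W+4q}\le 0$; the disjointness then comes from the budget and coverage from $r=3q$. Both work; the paper's construction is leaner (no per-element parties, no calibration voters, $\varphi=\tfrac12$), while yours is arguably more transparent about \emph{why} the bribe is rigid, and your ``remaining main work'' is in fact already essentially done: within budget $3q$ the only affordable top-change of $v_{i,j}$ is the single swap raising $a_i$, since any other coalition party must cross the $3q$-dummy block and non-coalition parties cannot overtake $a_i$ under coalition-shift, so the $(\mathcal{C},r)$ summary and your algebra (which checks out) complete the proof. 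Two small points to fix: your source problem (3-element sets, each element in at most four sets) is not the variant the paper calls 3-4-Exact-Cover (4-element sets, each element in exactly three sets); your variant is still NP-hard (it contains restricted X3C), but state and cite it accurately. Also note that the paper achieves unbribability of padding voters structurally with unit slopes for everyone, whereas you use slope $B{+}1$ for dummy voters -- still multiplicative, so consistent with the theorem statement, but worth saying explicitly.
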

%The proof is deferred to the appendix.  
From Theorem \ref{thm:shift-Plurality-t-CBP} we obtain:
\begin{corollary}
     Plurality$_t$-CB is NP-hard for swap-bribery.
    \label{cor:swap-Plurality-t-CBP}
\end{corollary}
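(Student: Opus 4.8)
The plan is to show that coalition-shift-bribery with multiplicative price functions is, for the purposes of Plurality$_t$-CB, a special case of swap-bribery, so that the hardness established in Theorem~\ref{thm:shift-Plurality-t-CBP} transfers directly. Concretely, given a Plurality$_t$-CB instance under coalition-shift-bribery with $s_i(x)=\sigma_i\cdot x$, I would build a Plurality$_t$-CB instance under swap-bribery on the same election $E$, the same coalition $A$, the same threshold $t$, target $\varphi$, and budget $B$, with swap costs
\[
 sw_i(x,y)=\begin{cases} \sigma_i & y\in A,\\ M & y\notin A,\end{cases}
\]
for $M=B+1$. The idea is that raising a coalition member past any party should cost $\sigma_i$, exactly as a one-position shift does, whereas raising a non-coalition member above anything is priced out of the budget, reflecting the fact that coalition-shift-bribery can move only coalition members, and only upward.

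I would then prove the two instances have the same answer. For the forward direction, observe that a legal coalition-shift-bribe $\bbribe$ is also a legal swap-bribe, and that when only coalition members are shifted upward, every created inversion $(x,y)$ (i.e.\ $y$ below $x$ in $O_i$ and above $x$ in $\sbribe_i$) has $y\in A$: shifting a coalition member up only pushes other parties down, so no non-coalition party ever rises. Hence the swap-cost of $\bbribe$ equals $\sum_i \sigma_i\cdot(\#\text{inversions at }v_i)=\pi^{shift}(\order,\bbribe)\le B$, and since $\bbribe$ is unchanged the Plurality scores, and thus feasibility for CB, are identical.

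For the converse, take a swap-bribe $\bbribe$ of cost $\le B<M$. It creates no inversion $(x,y)$ with $y\notin A$, so in particular no non-coalition party rises above $top(O_i)$; hence $top(\sbribe_i)$ is either $top(O_i)$ or a party of $A$. Define $\bbribe'$ by: if $top(\sbribe_i)\in A$, shift that party to the top of $O_i$ (a legal coalition-shift move); otherwise leave $O_i$ unchanged. Then $\bbribe'$ is a legal coalition-shift-bribe with the same tops as $\bbribe$, hence the same Plurality scores and the same feasibility for CB. Finally, if $top(\sbribe_i)\in A$ sits at position $k$ in $O_i$, the shift-cost of this move is $(k-1)\sigma_i$, while in $\bbribe$ that same party must have been raised past the $k-1$ parties above it, contributing at least $(k-1)\sigma_i$ to the swap-cost; summing over voters gives $\pi^{shift}(\order,\bbribe')\le \pi^{swap}(\order,\bbribe)\le B$. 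Thus a solution exists on one side iff it does on the other, and Plurality$_t$-CB is NP-hard for swap-bribery. The only delicate points are the observation that budget-respecting swap-bribes cannot change a voter's top to a non-coalition party, and the per-voter cost comparison; both are short once the swap costs above are in place, so I do not anticipate a substantial obstacle beyond careful bookkeeping.
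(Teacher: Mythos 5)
Your proposal is correct and is essentially the paper's own proof: the paper likewise reduces from the multiplicative coalition-shift-bribery instance of Theorem~\ref{thm:shift-Plurality-t-CBP}, setting $sw_i(x,y)$ to the slope of $s_i$ for $y\in A$ and $B+1$ otherwise, so that forbidden swaps are priced out and allowed shifts keep the same cost. The only difference is that you spell out the two-directional cost equivalence (including converting a cheap swap-bribe back into a top-shift of no greater cost) in more detail than the paper's one-line justification.
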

\begin{proof}
Given an instance of coalition shift bribery with multiplicative $s_i$'s, we reduce it to an instance of swap bribery. For voter $v_i$ and $x,y\in C$ define $sw_i(x,y)=(s_i)'$ for $y\in A$ (where $(s_i)'$ is the slope of $s_i$), and $sw_i(x,y)=B+1$ otherwise.  The latter provides that only allowed swaps will be considered, and the former that the cost remains the same.
\end{proof}
%Since Plurality$_t$-CBP is a special case of Plurality$_t$-CBPP, we obtain:
\begin{corollary}
    Plurality$_t$-CBP is an NP-hard problem for swap-bribery and coalition-shift-bribery.
    \label{cor:swap-non-shift-Plurality-t-CBPP}
\end{corollary}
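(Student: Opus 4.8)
The plan is to observe that Plurality$_t$-CBP is a syntactic generalization of Plurality$_t$-CB, so NP-hardness transfers for free from the results already established in the excerpt. Concretely, given an arbitrary instance of Plurality$_t$-CB --- an election $E=(C,V,\order)$, a coalition $A\subseteq C$ (which we may assume non-empty, as otherwise the problem is trivial), the allocation function $\seats{\text{Plurality}}{t}$, a bribery type with price function $\pi$, a budget $B$, and target support $\varphi$ --- I would produce the Plurality$_t$-CBP instance that keeps $E$, $A$, $\seats{\text{Plurality}}{t}$, $\pi$, $B$, $\varphi$ unchanged, designates an arbitrary party $c_{j_1}\in A$ as the preferred party, and sets the target ratio $\rho=0$. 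This map is clearly computable in polynomial time.

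Correctness is immediate. Condition (c) of the CBP instance reads $\seats{\text{Plurality}}{t}(\bbribe,c_{j_1})\geq 0\cdot\seats{\text{Plurality}}{t}(\bbribe,A)=0$, which holds for every bribe $\bbribe$ since seat shares are non-negative. Hence a bribe $\bbribe$ satisfies (a), (b), (c) for the constructed CBP instance if and only if it satisfies (a), (b) for the original CB instance, so the two instances have the same answer. Applying this reduction to Corollary~\ref{cor:swap-Plurality-t-CBP} (Plurality$_t$-CB is NP-hard for swap-bribery) gives NP-hardness of Plurality$_t$-CBP for swap-bribery, and applying it to Theorem~\ref{thm:shift-Plurality-t-CBP} (Plurality$_t$-CB with coalition-shift-bribery is NP-hard) gives NP-hardness of Plurality$_t$-CBP for coalition-shift-bribery.

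There is essentially no obstacle: the only points to verify are that the constructed CBP instance is well-formed --- the preferred party must lie in the coalition, which is why we take any $c_{j_1}\in A$ --- and that the reduction alters neither the scoring rule nor the bribery type, which it does not. So the entire argument reduces to the bookkeeping observation that $\rho=0$ neutralizes constraint (c), leaving exactly the CB problem; I would phrase the final proof in one or two sentences referencing Theorem~\ref{thm:shift-Plurality-t-CBP} and Corollary~\ref{cor:swap-Plurality-t-CBP}.
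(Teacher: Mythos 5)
Your proposal is correct and matches the paper's own (implicit) argument: the paper treats Plurality$_t$-CB as the special case of Plurality$_t$-CBP with $\rho=0$, so the corollary follows directly from Theorem~\ref{thm:shift-Plurality-t-CBP} and Corollary~\ref{cor:swap-Plurality-t-CBP}, exactly as you argue. Your added bookkeeping (choosing an arbitrary preferred party $c_{j_1}\in A$ and noting that condition (c) is vacuous when $\rho=0$) is just a spelled-out version of the same reduction.
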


\section{Borda}
\subsection{Borda$_{0}$}
We show that Borda$_{0}$-CB and Borda$_{0}$-CBP for $1$-bribery, $\$$-bribery and coalition-shift-bribery can be solved in polynomial time.  Again, the algorithm uses dynamic programming.  We start with the following lemma.
\begin{lemma}
For voter $v_i$, let $h(i,k_{A_{-1}},k_1)$ be the least cost of a bribe $\hat{O}_i$ such that $\text{Borda}(\hat{O}_i,A_{-1})=k_{A_{-1}}$ and $\text{Borda}(\hat{O}_i,c_1)=k_1$, if such a bribe exists and $\infty$ otherwise.  For $1$-bribery, $\$$-bribery and coalition-shift-bribery, 
$h(i,k_{A_{-1}},k_1)$ can be computed in polynomial time, for any $i,k_{A_{-1}},k_1$.
\end{lemma}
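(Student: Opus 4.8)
The goal is to compute, for each voter $v_i$, the table $h(i,k_{A_{-1}},k_1)$ giving the least cost of a single-voter bribe $\hat{O}_i$ achieving $\text{Borda}(\hat{O}_i,A_{-1})=k_{A_{-1}}$ and $\text{Borda}(\hat{O}_i,c_1)=k_1$ simultaneously. The plan is to fix $v_i$ and iterate over the $O(m^2)$ relevant target pairs $(k_1,k_{A_{-1}})$ — note that $k_1\in\{0,\dots,m-1\}$ is just $m-pos(\hat{O}_i,c_1)$, so it directly pins down the position of $c_1$ in $\hat{O}_i$, and $k_{A_{-1}}$ is a sum of $h-1$ Borda scores, hence in $\{0,\dots,(h-1)(m-1)\}$, polynomially bounded. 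For each fixed target pair, I would show that the minimum-cost bribe can be found in polynomial time, and then handle the three bribery types ($1$-bribery, $\$$-bribery, coalition-shift-bribery) separately, since the cost structures differ.

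For $1$-bribery and $\$$-bribery the cost of any modified order $\hat{O}_i$ is a constant ($1$ or $p_i$) independent of the permutation chosen, so the only question is \emph{feasibility}: does there exist a permutation of $C$ placing $c_1$ at the position forced by $k_1$ and placing the $h-1$ members of $A_{-1}$ so that their Borda scores sum to exactly $k_{A_{-1}}$? This reduces to: can we choose $h-1$ distinct positions (from the $m-1$ slots other than the one occupied by $c_1$) whose ``Borda values'' $m-pos$ sum to $k_{A_{-1}}$? That is a bounded subset-sum with distinctness over a polynomial-size universe, solvable by a straightforward DP over (number of $A_{-1}$-parties placed, running score, positions used). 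If feasible, $h(i,k_{A_{-1}},k_1)$ is the constant cost (and $0$ if no change from $O_i$ is needed); otherwise $\infty$.

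For coalition-shift-bribery the cost does depend on the permutation, but the allowed moves are heavily constrained: only members of $A$ may move, and only upward relative to members of $\bar A$, and the cost is $s_i$ applied to the number of (party-in-$A$, party-not-in-$A$) inversions created. The key observation I would exploit is that an optimal shift-bribe never reorders parties within $A$ nor within $\bar A$ — it only interleaves the (fixed internal order) block-sequence of $A$-parties upward through the $\bar A$-parties — so a bribe is determined by, for each party $a\in A$, how far up it is shifted past $\bar A$-parties. The target $k_1$ fixes $c_1$'s final position; the target $k_{A_{-1}}$ constrains the multiset of final positions of $A_{-1}$; and the cost is a monotone function $s_i$ of the total number of $A$-over-$\bar A$ crossings, which equals the sum over $a\in A$ of its individual upward shift count. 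I would set up a DP that scans positions (or $\bar A$-parties) from top to bottom, tracking how many $A$-parties have been ``placed'' so far, the accumulated $A_{-1}$-Borda score, whether $c_1$ is placed at the required spot, and the total crossing count — taking the minimum of $s_i(\text{total crossings})$ over all completions. Because $s_i$ is monotone in the crossing count, minimizing crossings subject to the positional constraints suffices, and all quantities range over polynomially many values, so the DP is polynomial.

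The main obstacle I anticipate is the coalition-shift case: I must argue carefully that restricting to bribes that preserve the internal orders of $A$ and of $\bar A$ (and hence are fully described by per-$A$-party shift amounts) loses no optimality, and that the relationship between these shift amounts, the final positions of $c_1$ and of $A_{-1}$, and the crossing-count cost is exactly captured by the DP's state transitions. The $1$-bribery and $\$$-bribery cases are comparatively routine once feasibility is phrased as the bounded distinct-subset-sum DP above. Summing the per-voter work over all $n$ voters and all $O(m^2\cdot h)$ table entries keeps the total polynomial.
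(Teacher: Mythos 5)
Your treatment of $1$-bribery and \$-bribery is fine and close in spirit to the paper's: cost is constant, so only attainability of $(k_1,k_{A_{-1}})$ matters, with $k_1$ pinning $pos(\hat O_i,c_1)=m-k_1$; where the paper characterizes the attainable $k_{A_{-1}}$ values by an interval argument over the sizes of the parts of $A_{-1}$ placed below/above $c_1$, you use a distinct-position subset-sum DP, which works equally well.

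The coalition-shift case, however, has a genuine gap: your key normalization claim, that an optimal shift-bribe never reorders parties within $A$, is false, and the DP built on it (interleaving the $A$-block in its original internal order through $\bar A$) would wrongly declare attainable pairs infeasible. The definition only forbids a pair $x\in A$, $y\notin A$ from inverting against $x$; members of $A$ may freely cross \emph{each other}, and hitting a prescribed $(k_1,k_{A_{-1}})$ can force exactly that. Concretely, take $m=3$, $A=\{c_1,c_2\}$, $\bar A=\{c_3\}$, and $O_i: c_2\succ c_1\succ c_3$; the target $k_1=2$, $k_{A_{-1}}=1$ is realized only by $\hat O_i: c_1\succ c_2\succ c_3$, a legal coalition-shift bribe of cost $s_i(1)$, yet it swaps $c_1$ and $c_2$ inside $A$ and so is invisible to your DP. Relatedly, your cost accounting (``number of $A$-versus-$\bar A$ crossings'') undercounts: the price is $s_i$ of \emph{all} inverted pairs, including within-$A$ inversions such as $(c_2,c_1)$ above. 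The paper avoids both issues by normalizing only the internal order of $A_{-1}$ (and leaving $\bar A$ untouched), treating $c_1$ separately: its placement at position $m-k_1$ costs $pos(O_i,c_1)-(m-k_1)$ shift units (which counts crossings over $A_{-1}$ members as well), and $A_{-1}$ is then split into the parts below and above $c_1$, whose achievable cost/score combinations form intervals. Your plan can be repaired along those lines -- fix only $A_{-1}$'s and $\bar A$'s internal orders and let $c_1$ interleave freely, charging its crossings -- but as written the central structural claim fails.
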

\begin{proof}
{\bf {Computing $h$ for 1-bribery and \$-bribery.}}
If $O_i$ already gives the necessary supports then $h(i,k_{A_{-1}},k_1)=0$.  Otherwise, we say that $(k_{A_{-1}},k_1)$ is \emph{attainable} if there exists $\hat{O}_i$ with $\text{Borda}(\hat{O}_i,A_{-1})=k_{A_{-1}}$ and $\text{Borda}(\hat{O}_i,c_1)=k_1$.  If $(k_{A_{-1}},k_1)$ is attainable, then $h(i,k_{A_{-1}},k_1)=p_i$.  If $(k_{A_{-1}},k_1)$ is unattainable, then $h(i,k_{A_{-1}},k_1)=\infty$.  For 1-bribery, $p_i=1$. 

It remains to decide if $(k_{A_{-1}},k_1)$ is attainable. Suppose that $(k_{A_{-1}},k_1)$ is attainable by $\hat{O}_i$. Then, in particular, $pos(\hat{O}_i,c_1)=m-k_1$.  The members of $A_{-1}$ are placed below and above $c_1$, with $A^{\downarrow}$ below and $A^{\uparrow}$ above, and $A_{-1}=A^{\downarrow}\cup A^{\uparrow}$. The least number of points provided by $A^{\downarrow}$ is $|A^{\downarrow}|\frac{|A^{\downarrow}|-1}{2}$, which occurs when the members of $A^{\downarrow}$ occupy all the bottom places.  The maximum number of points that $A^{\downarrow}$ can provide is $|A^{\downarrow}|(k_1-\frac{|A^{\downarrow}|+1}{2})$ - which occurs when the members of $A^{\downarrow}$ are located immediately below $c_1$.  By shifting members of $A^{\downarrow}$ one by one between these two extreme settings, all intermediate values are also possible.  Similarly, the possible values for $\text{Borda}(\hat{O}_i, A^{\uparrow})$ are all those between $|A^{\uparrow}|(k_1+\frac{|A^{\uparrow}|+1}{2})$ and $|A^{\uparrow}|(m-\frac{|A^{\uparrow}|+1}{2})$.  

So, in all, $(k_{A_{-1}},k_1)$ is attainable iff there exist $\ell^{\downarrow},\ell^{\uparrow}$ (representing the sizes of $A^{\downarrow},A^{\uparrow}$), such that:
\begin{align*}
\bullet &\; \ell^{\downarrow}+\ell^{\uparrow}=|A_{-1}|, \\
\bullet &\; \ell^{\downarrow}\leq k_1, \ell^{\downarrow}< m-k_1,\\
\bullet &\; \ell^{\downarrow}\frac{\ell^{\downarrow}-1}{2} +
\ell^{\uparrow}(k_1+\frac{\ell^{\uparrow}+1}{2}) 
\leq k_{A_{-1}} \leq \\
&\;  \;\;\;\;\;\;\ell^{\downarrow}(k_1-\frac{\ell^{\downarrow}+1}{2})+
\ell^{\uparrow}(m-\frac{\ell^{\uparrow}+1}{2})
\end{align*} 
For any $k_1,k_{A{-1}}$ these constraints can be checked in polynomial time.

\paragraph{Computing $h$ for coalition-shift-bribery.} 
Let $\hat{O}_i$ be a bribe the realizes $h(i,k_{A_{-1}})$.  We may assume that the internal order between members of $A_{-1}$ in $\hat{O}_i$ is the same as in $O_i$.  Otherwise, we can swap between any two that or not in order, and the cost would not increase.  
%So, w.l.o.g. assume that the order in $c_2,c_3,\ldots,c_{h}$, with $c_2$ being the highest and $c_{h}$ the lowest. 
By definition, $pos(\hat{O}_i,c_1)=m-{k_1}$. 
The members of $A_{-1}$ are placed below and above $c_1$, with $A^{\uparrow}$ above and $A^{\downarrow}$ below. Set $\ell^{\uparrow}=|A^{\uparrow}|, \ell^{\downarrow}=|A^{\downarrow}|$.  Since shift-bribery does not permit moving parties down, $pos(O_i,c_j)\leq m-k_1$, for $c_j\in A^{\downarrow}$.  We separately analyze the cost and points offered by $A^{\downarrow},c_1,$ and $A^{\uparrow}$.  %For the cost, we count the cost when considering the higher element of the pair.

\paragraph{$A^{\downarrow}$:} The least cost and least points (for $A^{\downarrow}$) are afforded by the order wherein all members of $A^{\downarrow}$ are as in $O_i$.  In this case, the cost $0$, and the Borda points (provided $A^{\downarrow}$) are some value, which we denote $\underline{M}^{\downarrow}_{O_i}(\ell^{\downarrow})$.\footnote{$\underline{M}^{\downarrow}_{O_i}(\ell^{\downarrow})=\sum_{j=1}^{\ell^{\downarrow}}(m-pos(O_i,c_{h-\ell^{\downarrow}+j}))$, but the specific value is not important.} The highest cost and points (for $A^{\downarrow}$) are provided by the order wherein members of $A^{\downarrow}$ are placed directly below $c_1$.  The cost in this case is some $\overline{B}^{\downarrow}_{O_i}(\ell^{\downarrow})$.  Importantly, every increase in the cost is also an increase in Borda points. So, the total number of points awarded by this order is $\underline{M}^{\downarrow}_{O_i}(\ell^{\downarrow})+\overline{B}^{\downarrow}_{O_i}(\ell^{\downarrow})$.  
By shifting the elements of $A^{\downarrow}$ up, one step at a time, without ever swapping elements of $A^{\downarrow}$, 
any intermediate cost and points values can be obtained.  That is, for any $t^{\downarrow}=0,\ldots, \overline{B}^{\downarrow}_{O_i}(\ell^{\downarrow})$, there is an order of $A^{\downarrow}$ with all elements are below $c_1$, the is cost $t^{\downarrow}$, and the Borda points are $\underline{M}^{\downarrow}_{O_i}(\ell^{\downarrow})+t^{\downarrow}$.

\paragraph{$c_1$:} The cost due to placing $c_1$ at $m-k_1$ is $B^{1}_{O_i}(k_1)=pos(O_i,c_1)-(m-k_1)$, and the points provided are $k_1$.

\paragraph{$A^{\uparrow}$:}  
The least cost and least points (for $A^{\uparrow}$) are afforded by the order wherein members of $A^{\downarrow}$ are placed as low as possible, provided that: (i) all are above $c_1$, (ii) the order among $A^{\uparrow}$ is as in $O_i$, (iii) no element is placed lower than in $O_i$. Let $\underline{B}^{\uparrow}_{O_i}(\ell^{\uparrow})$ and $\underline{M}^{\uparrow}_{O_i}(\ell^{\uparrow})$ the cost and the Borda points of this order. The highest cost and points (for $A^{\uparrow}$) are provided by the order wherein members of $A^{\uparrow}$ are placed at the top locations. Denote the cost and points of this order by $\overline{B}^{\uparrow}_{O_i}(\ell^{\uparrow})$ and $\overline{M}^{\uparrow}_{O_i}(\ell^{\uparrow})$, respectively.  Again, by shifting the elements of $A^{\uparrow}$ one step at a time, without ever swapping elements of $A^{\uparrow}$, 
any intermediate cost and point values can be obtained.  That is, for any $t^{\downarrow}=0,\ldots, (\overline{B}^{\uparrow}_{O_i}(\ell^{\uparrow})-\underline{B}^{\uparrow}_{O_i}(\ell^{\uparrow}))$, there is an order of $A^{\downarrow}$, with all elements above $c_1$, with cost $\underline{B}^{\uparrow}_{O_i}(\ell^{\uparrow})+t^{\downarrow}$ and points $\underline{M}^{\uparrow}_{O_i}(\ell^{\uparrow})+t^{\downarrow}$.

So, in all, we have that $k_1,k_{A_{-1}}$ is realizable with least cost $B'$ if there exists $0\leq \ell^{\downarrow},\ell^{\uparrow}\leq h,0\leq t^{\downarrow}\leq \overline{B}^{\downarrow}_{O_i}(\ell^{\downarrow}),0\leq t^{\uparrow}\leq (\overline{B}^{\uparrow}_{O_i}(\ell^{\uparrow})-\underline{B}^{\uparrow}_{O_i}(\ell^{\uparrow})),\underline{M}^{\downarrow}_{O_i}(\ell^{\downarrow})\leq k^{\downarrow}_{A}\leq \underline{M}^{\downarrow}_{O_i}(\ell^{\downarrow})+\overline{B}^{\downarrow}_{O_i}(\ell^{\downarrow}),\underline{M}^{\uparrow}_{O_i}(\ell^{\uparrow})\leq k^{\uparrow}_A\leq \overline{M}^{\uparrow}_{O_i}(\ell^{\uparrow})$ such that:
\begin{enumerate}
    \item $\ell^{\downarrow}+\ell^{\uparrow}=|A_{{-1}}|$,
    \item $\ell^{\uparrow}<m-k_1-1$ (there is sufficient room above $c_1$),
    \item $k_A=k_A^{\downarrow}+k_A^{\uparrow}$,
    \item $k_A^{\downarrow}=\underline{M}^{\downarrow}_{O_i}(\ell^{\downarrow})+t^{\downarrow} \;\; , \;\;\; k_A^{\uparrow}=\underline{M}^{\uparrow}_{O_i}(\ell^{\uparrow})+t^{\uparrow}$,
    \item $B'=t^{\downarrow}+B^{1}_{O_i}(k_1)+\underline{B}^{\uparrow}_{O_i}(\ell^{\uparrow})$.
\end{enumerate}
Then, $h(i,k_{A_{-1}},k_1)$ is the least $B'$ for which the above hold. There are a polynomial number of combinations to check, each of which is checked in constant time. The values of $\overline{B}^{\downarrow}_{O_i},\underline{B}^{\uparrow}_{O_i},\overline{B}^{\uparrow}_{O_i},\underline{M}^{\downarrow}_{O_i}, \underline{M}^{\uparrow}_{O_i},\overline{M}^{\uparrow}_{O_i}$ can be computed in polynomial time. So, the entire computation is polynomial.
\end{proof}
%We will show that $g$ can be computed in polynomial time, for all $k_A,k_1$.  
\begin{theorem}
    Borda$_{0}$-CBP can be solved in polynomial time for $1$-bribery, $\$$-bribery and coalition-shift-bribery. 
  \label{thm:dolar-Borda-CBPP}
\end{theorem}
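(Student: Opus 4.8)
The plan is to reduce the problem to a dynamic program over the voters, using the per-voter function $h$ from the preceding lemma as the atomic step. The first observation is that for $t=0$ every party is active, so the denominator in $\seats{\text{Borda}}{0}$ is the total Borda mass $N:=n\binom{m}{2}$, which is the same for every bribe. Hence $\seats{\text{Borda}}{0}(\bbribe,C')=\text{Borda}(\bbribe,C')/N$ for every $C'\subseteq C$, and conditions (b) and (c) of CBP become arithmetic constraints on the two integers $K_1:=\text{Borda}(\bbribe,c_1)$ and $K_{A_{-1}}:=\text{Borda}(\bbribe,A_{-1})$, namely $K_1+K_{A_{-1}}\ge \varphi N$ and $K_1\ge \rho(K_1+K_{A_{-1}})$. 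Both integers lie in $\{0,\dots,N\}$, and $N$ is polynomial in the input size.

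Next I would set up the dynamic program. Because, when $t=0$, the seat allocation depends on $\bbribe$ only through the aggregates $\text{Borda}(\bbribe,c_1)$ and $\text{Borda}(\bbribe,A_{-1})$, and because the bribe cost is additive over voters, it suffices to choose, independently for each voter $v_i$, a pair of contributions $(k_{A_{-1}}^{(i)},k_1^{(i)})$ at the least possible cost $h(i,k_{A_{-1}}^{(i)},k_1^{(i)})$. Define $G(i,K_{A_{-1}},K_1)$ as the minimum total cost of bribing $v_1,\dots,v_i$ so that $\sum_{j\le i}\text{Borda}(\hat O_j,A_{-1})=K_{A_{-1}}$ and $\sum_{j\le i}\text{Borda}(\hat O_j,c_1)=K_1$ (value $\infty$ if impossible), with base case $G(0,0,0)=0$ and $G(0,\cdot,\cdot)=\infty$ otherwise, and recurrence
\[
  G(i,K_{A_{-1}},K_1)=\min_{\substack{0\le k_{A_{-1}}\le K_{A_{-1}}\\ 0\le k_1\le K_1}}\bigl(G(i-1,K_{A_{-1}}-k_{A_{-1}},K_1-k_1)+h(i,k_{A_{-1}},k_1)\bigr).
\]
Since $i$, the $K$'s and the $k$'s all range over $O(N)$ values and $h$ is computed in polynomial time by the lemma, the whole table is filled in polynomial time.

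Finally, the algorithm answers ``success'' iff some pair $K_{A_{-1}},K_1\in\{0,\dots,N\}$ satisfies $K_1+K_{A_{-1}}\ge\varphi N$, $K_1\ge\rho(K_1+K_{A_{-1}})$ and $G(n,K_{A_{-1}},K_1)\le B$; a witnessing bribe is recovered by backtracking the DP and, within each voter, taking a bribe that realizes $h$. One direction is immediate; for the converse, any feasible $\bbribe$ yields values $K_1,K_{A_{-1}}$ meeting the constraints and has cost at least $\sum_i h(i,\text{Borda}(\hat O_i,A_{-1}),\text{Borda}(\hat O_i,c_1))\ge G(n,K_{A_{-1}},K_1)$, so the test succeeds. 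I do not expect a genuinely hard step here: the only points needing care are (i) confirming that with no threshold the Borda denominator is a fixed constant, so the per-voter decomposition is sound, and (ii) the degenerate reading of condition (c) when $\varphi=0$ and the coalition's total Borda score is $0$ (read as vacuously satisfied). The substantive content — that the per-voter minimization $h$ is polynomial for $1$-bribery, $\$$-bribery and coalition-shift-bribery — has already been supplied by the lemma, and Borda$_{0}$-CB is the special case $\rho=0$.
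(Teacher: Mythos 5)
Your proposal is correct and follows essentially the same route as the paper: a per-voter dynamic program built on the lemma's function $h$, exploiting that with $t=0$ the Borda denominator is the fixed total $n\,m(m-1)/2$, followed by a sweep over all aggregate score pairs to check the $\varphi$ and $\rho$ constraints against the budget. The only difference is cosmetic — you index the table by $(K_{A_{-1}},K_1)$ while the paper uses $(k_A,k_1)$ with the conversion $k_{A_{-1}}=k_A-k_1$ inside the recurrence.
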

\begin{proof}
Let $f(i,k_{A},k_1)$ be the least cost of a bribe $\hat{\order}_{\{1,\ldots,i\}}=(\hat{O}_1,\ldots,\hat{O}_i)$ of voters $v_1,\ldots, v_i$, with ${Borda}(\hat{\order}_{\{1,\ldots,i\}},A)=k_A$ and  $Borda(\hat{\order}_{1,\ldots,i},c_1)=k_1$.  Then, $f$ can be computed as follows:
    \begin{align*}
    \bullet &    f(1,k_A,k_1)=h(1,k_A-k_1,k_1)\\
    \bullet &    f(i,k_A,k_1)= 
        \min_{0\leq k'_A\leq k_A, 0\leq k'_1\leq k_1} \\ & \{ h(i+1,k'_A-k'_1,k'_1)+f(i,k_A-k'_A,k_1-k'_1) \}
    \end{align*}
Now, $f(n,k_{A},k_1)$ is the cost of the least bribe over all voters that obtains $k_A$ points for $A$ and $k_1$ points for $c_1$.
So, in order to solve Borda$_{0}$-CBP, 
compute $f(n,k_A,k_1)$ for all $1\leq k_1\leq k_A \leq \frac{|A|m}{2}$, such that $\frac{k_A}{n{{m(m-1)}\over 2}}\geq \varphi$ (the denominator is the total number of Borda points), and $\frac{k_1}{k_A}\geq \rho$. If any has value $\leq B$ return success, otherwise return failure.  There is only a polynomial number of such pairs $k_1,k_A$ to check.
\end{proof}
\begin{corollary}
 Borda$_{0}$-CB can be solved in polynomial time for $1$-bribery, $\$$-bribery and coalition-shift-bribery
\label{cor:1-dolar-shift-Borda-CBP}
\end{corollary}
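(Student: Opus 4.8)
The plan is to observe that Borda$_{0}$-CB is simply the special case of Borda$_{0}$-CBP in which the preferred-party constraint is vacuous, and then invoke Theorem~\ref{thm:dolar-Borda-CBPP}. Concretely, given an instance of Borda$_{0}$-CB with election $E=(C,V,\order)$, coalition $A$, budget $B$, and target support $\varphi$, I would build a Borda$_{0}$-CBP instance with the same $E$, $A$, $B$, $\varphi$, by additionally designating an arbitrary party of $A$ (say $c_1$) as the preferred party $c_{j_1}$ and setting the target ratio to $\rho=0$.

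The key step is to note that with $\rho=0$ the CBP condition (c), namely $\seats{\text{Borda}}{0}(\bbribe,c_1)\ge \rho\cdot\seats{\text{Borda}}{0}(\bbribe,A)=0$, holds for every bribe $\bbribe$, since seat fractions are always non-negative. Hence a bribe $\bbribe$ satisfies conditions (a), (b), and (c) of the constructed CBP instance if and only if it satisfies conditions (a) and (b) of the original CB instance; the two instances are equivalent. The bribery type (and its price function $\pi$) is carried over unchanged, so if the CB instance uses $1$-bribery, $\$$-bribery, or coalition-shift-bribery, so does the CBP instance.

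Applying Theorem~\ref{thm:dolar-Borda-CBPP}, which solves Borda$_{0}$-CBP for exactly these three bribery types in polynomial time, therefore decides the original Borda$_{0}$-CB instance in polynomial time. This argument is essentially identical to the one used to derive Corollary~\ref{cor:dolar-Plurality-t-CBP-plurality-CBP} and Corollary~\ref{cor:Plurality-cbp} from their respective CBP theorems. There is no real obstacle here: the reduction is immediate, requires no new computation, and introduces only a constant-size overhead (choosing $c_1$ and setting $\rho=0$); the only thing to be careful about is confirming that the $t=0$ seat-allocation function and the price functions are transferred verbatim, which they are.
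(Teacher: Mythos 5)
Your proposal is correct and matches the paper's (implicit) argument: the paper derives Corollary~\ref{cor:1-dolar-shift-Borda-CBP} directly from Theorem~\ref{thm:dolar-Borda-CBPP} via the same observation that CB is the special case of CBP obtained by setting $\rho=0$ (with an arbitrary preferred party), exactly as it does for the Plurality corollaries. No further comment needed.
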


\paragraph{Swap Bribery}
For swap-bribery,  Borda$_{0}$-CB is NP-hard.  The proof is by a reduction from Min-Bisection problem. Hence, Borda$_{0}$-CBP is also NP-hard.
\begin{theorem}
     Borda$_{0}$-CB and Borda$_{0}$-CBP are NP-hard for swap-bribery. 
    \label{thm:swap-borda-cbp}
\end{theorem}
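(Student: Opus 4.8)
The plan is to reduce from the Min-Bisection problem: given a graph $G=(V,E)$ with $|V|=2k$ and a bound $W$, decide whether $V$ can be split into two parts of size $k$ with at most $W$ crossing edges. I would start by simplifying the goal of Borda$_{0}$-CB. Since $t=0$, all votes are active, so $\seats{\text{Borda}}{0}(\bbribe,C)=1$ for every bribe, and $\seats{\text{Borda}}{0}(\bbribe,A)=\text{Borda}(\bbribe,A)/\big(n\binom{m}{2}\big)$, where the denominator is invariant under bribes. Hence condition (b) of CB is equivalent to $\text{Borda}(\bbribe,A)\ge K$ for the fixed threshold $K=\varphi\, n\binom{m}{2}$. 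Moreover, under swap-bribery both the total cost and the quantity $\text{Borda}(\bbribe,A)=\sum_i\text{Borda}(\hat O_i,A)$ split over the voters, and each ballot is edited independently; so the instance is really a collection of per-voter ``gained-Borda vs.\ cost'' gadgets tied together only by the single inequality $\text{Borda}(\bbribe,A)\ge K$ and the budget $B$. The reduction must therefore encode the graph into these gadgets.

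From $G$ I would build an election whose coalition $A=\{p_1,\dots,p_{2k}\}$ contains one party per vertex, plus a pool of dummy parties, among them one dummy $z_e$ per edge. For each edge $e=\{i,j\}$ I create a voter $v_e$ whose initial order freezes every party except $z_e,p_i,p_j$ (by assigning those a prohibitively large swap cost inside $v_e$) and places the block $z_e,p_i,p_j$ in a fixed middle window; the voter-specific swap costs among $z_e,p_i,p_j$ are tuned so that $v_e$'s only affordable non-trivial moves are to lift $p_i$ and/or $p_j$ above $z_e$, and so that the configuration in which the two endpoints of $e$ are treated \emph{differently} (one lifted, one not) is comparatively expensive per unit of Borda gained -- this is where a carefully chosen non-monotone cost pattern along the ballot is needed. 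In addition I add, per vertex, a ``selector'' voter $w_i$ whose bribe contributes a controllable amount of Borda to $p_i$, and I set $K$ so that the global inequality $\text{Borda}(\bbribe,A)\ge K$ can be met only by ``selecting'' exactly $k$ vertices, and set $B=B_0+\kappa W$, where $B_0$ is the unavoidable cost of any score-$K$ bribe and $\kappa$ is the extra cost charged to a ``split'' edge.

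The two directions would then run as follows. Given a balanced partition $S$ with $c\le W$ crossing edges, lift $p_i$ (in $w_i$ and in the incident edge-voters) exactly for $i\in S$; this reaches $\text{Borda}(\bbribe,A)\ge K$ at cost $B_0+\kappa c\le B$. Conversely, from any bribe of cost $\le B$ with $\text{Borda}(\bbribe,A)\ge K$, read a candidate set $S$ off the selector voters and argue, via the cost accounting, that $|S|=k$ and that at most $W$ edges are split, so that $S$ is a valid bisection.

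The main obstacle, I expect, is exactly the tension created by the per-voter decomposition of swap-bribery: nothing local forces a vertex to be treated consistently across all voters that mention it, nor forces the number of ``selected'' vertices to be exactly $k$ rather than more, so a priori a clever bribe could dodge the split penalties without corresponding to any honest balanced bipartition. Ruling this out is what forces the delicate joint calibration of $K$, $B$, $\kappa$, the selector-voter gadgets and possibly additional normalizing voters; and designing the edge-voter gadget to realize precisely the required non-convex cost-versus-Borda profile using only pairwise swap costs is the other technically loaded step. Finally, since Borda$_{0}$-CB is the special case of Borda$_{0}$-CBP with $\rho=0$, NP-hardness of the former immediately yields NP-hardness of the latter.
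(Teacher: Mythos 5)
You have chosen the same source problem as the paper (Min-Bisection), but what you have written is a plan rather than a reduction: the decisive steps are exactly the ones you defer. The most serious issue is the edge-voter gadget itself. Under swap-bribery the cost of a bribed ballot is the \emph{sum} of the costs of the inverted pairs, and the Borda gain is likewise additive over the parties that are lifted. Inside a single voter $v_e$ whose only mobile parties are $z_e,p_i,p_j$, the cost of lifting both $p_i$ and $p_j$ above $z_e$ is (up to the internal $p_i$-$p_j$ swap) the cost of lifting $p_i$ plus the cost of lifting $p_j$, and the Borda gains add the same way; the cost-versus-gain profile is modular, so you cannot make the ``one endpoint lifted, one not'' configuration expensive \emph{per unit of Borda} relative to the symmetric configurations, which is precisely the supermodular penalty your reduction needs. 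You flag this as ``technically loaded,'' but it is not a calibration detail: with only pairwise, nonnegative swap costs among three free parties there is no such gadget, so the correctness argument for the backward direction (reading off a bisection and bounding the crossing edges) has nothing to rest on. The second acknowledged obstacle, forcing exactly $k$ selected vertices and consistent treatment of a vertex across all voters mentioning it, is likewise left unresolved, and nothing in the additive structure you describe prevents a bribe from meeting $\text{Borda}(\bbribe,A)\ge K$ by incoherent, vertex-inconsistent lifts.

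For contrast, the paper avoids both problems by using a \emph{single} voter and a single non-coalition party $x$. Each vertex $u_i$ gets two coalition parties $c_{1,i},c_{2,i}$; the initial order is $x$ on top, then all $c_{1,\cdot}$, then all $c_{2,\cdot}$. Swaps within each layer are free, $c_{2,i}$ may never pass $c_{1,i}$ (cost $B+1$), crossing $x$ costs a large amount for a $c_{1,i}$ and a medium amount for a $c_{2,i}$, and the crossing of $c_{2,i}$ over $c_{1,j}$ ($i\ne j$) costs $1$ exactly when $(i,j)$ is an edge. Selecting a vertex means raising its pair above $x$; because a raised $c_{2,i}$ must sweep past every $c_{1,j}$ of an \emph{unselected} vertex, the crossing-edge count is charged automatically as a sum of pairwise swap costs, and the budget and target support ($2n$ new points for the coalition) force exactly $n$ selected vertices and at most $k$ paid edge-crossings. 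This is where the pairwise nature of swap costs is exploited correctly, rather than fought against. Your closing observation (hardness of Borda$_{0}$-CB transfers to Borda$_{0}$-CBP since CB is the $\rho=0$ special case) is fine, but the core reduction as proposed does not go through.
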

%The proof is by a reduction from Min-Bisection problem to Borda$_{0}$-CBP with swap bribery cost function, and due to space constraint it deferred to the appendix.

\subsection{Borda$_t$-CB}
Borda$_t$-CB is NP-hard for all bribery types studied in this paper.
For $1$-bribery and $\$$-bribery we show a reduction from the $3$-$4$-Exact-Cover problem, inspired by the reductions of~\cite{elkind2009swap,put2016complexity}.
\begin{theorem}
    Borda$_t$-CB is NP-hard for $1$-bribery and $\$$-bribery.
    \label{thm:1-Borda-t-cbp}
\end{theorem}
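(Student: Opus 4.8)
The plan is to reduce from 3-4-Exact-Cover: given a universe $U = \{e_1,\ldots,e_{3q}\}$ of $3q$ elements and a family $S = \{S_1,\ldots,S_r\}$ of subsets, each of size exactly $3$, each element appearing in exactly $4$ sets, decide whether some $q$ of the sets partition $U$. We build a Borda$_t$-CB instance whose coalition $A$ reaches the target support fraction $\varphi$ within budget $B$ if and only if the cover exists. The key idea, following the reduction style of~\cite{elkind2009swap,put2016complexity}, is to use the minimum-support threshold $t$ as a sharp gate: we will have ``element-parties'' $p_1,\ldots,p_{3q}$ that are initially just below the threshold, so each needs a small, indivisible boost to become active; and the only budget-feasible way to deliver these boosts is via ``set-gadget'' bribes, where bribing one voter associated with set $S_j$ simultaneously lifts exactly the three element-parties in $S_j$ over the threshold. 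Choosing which set-voters to bribe then corresponds exactly to choosing a subfamily of $S$, and the budget will be tuned so that only $q$ such bribes are affordable, while activating all $3q$ element-parties (needed to hit $\varphi$) forces those $q$ sets to be disjoint, i.e., an exact cover.

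Concretely, I would set up: the coalition $A$ consisting of the $3q$ element-parties (plus possibly one fixed ``anchor'' party in $A$ that is always active, to control the arithmetic of $\seats{\text{Borda}}{t}$); a large block of ``filler'' parties outside $A$ whose role is to pin down the total Borda score $\gamma(\order,C)$ and thus fix the threshold value $t\cdot\gamma(\order,C)$ precisely between ``element-party is one boost short'' and ``element-party has the boost''; and, for each set $S_j$, a collection of voters whose preference orders rank the three parties in $S_j$ just below the threshold line and everything else in a rigid filler pattern. A single $1$-bribe on such a voter (replacing its order by one that pushes the three $S_j$-parties up by the needed Borda amount) activates precisely those three element-parties. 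I would also include ``dummy'' voters/parties so that no cheaper route exists: any bribe that is not a full set-gadget bribe either fails to cross the threshold for any element-party or wastes budget. Set $B = q$ (for $1$-bribery; for $\$$-bribery give every set-voter cost $1$ and every other voter cost $B+1$), and set $\varphi$ so that $\seats{\text{Borda}}{t}(\bbribe,A) \ge \varphi\cdot\seats{\text{Borda}}{t}(\bbribe,C)$ holds exactly when all $3q$ element-parties are active.

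For correctness: in the forward direction, an exact cover $\{S_{j_1},\ldots,S_{j_q}\}$ yields a bribe of the $q$ corresponding set-voters, costing exactly $q \le B$, which activates all $3q$ element-parties (each element is covered exactly once, so each element-party gets exactly its one needed boost), hence the coalition's seat fraction reaches $\varphi$. In the reverse direction, a successful bribe of cost $\le B = q$ can touch at most $q$ voters; by the gadget design each bribed voter can activate at most the three element-parties of one set, so $q$ bribes activate at most $3q$ element-parties, and reaching $\varphi$ requires all $3q$ of them active — forcing the $q$ chosen sets to be pairwise disjoint and to cover $U$, i.e., an exact cover. The $\$$-bribery case is identical once non-set-voters are priced out of range.

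The main obstacle is the arithmetic of the threshold-and-normalization step: unlike Plurality, a Borda bribe on one voter changes the scores of \emph{many} parties at once (moving one party up pushes others down), so I must design the filler parties and the voters' base orders so that (a) the total score $\gamma(\order,C)$ — and hence the active-or-not cutoff — is robust to exactly the bribes I want to allow, (b) an element-party is strictly below $t\cdot\gamma(\order,C)$ before its boost and strictly at-or-above after, with no ``partial credit'' from unrelated bribes, and (c) activating a strict subset of the element-parties leaves $\seats{\text{Borda}}{t}(\bbribe,A)$ strictly below $\varphi\cdot\seats{\text{Borda}}{t}(\bbribe,C)$. Getting these three inequalities to hold simultaneously, with integer Borda scores and a rational threshold $t$, is the delicate bookkeeping; I expect to handle it by making the filler block large so that one unit of Borda score is negligible relative to totals, and by choosing $t$ as a fraction with a conveniently large denominator so the cutoff lands strictly between consecutive integers.
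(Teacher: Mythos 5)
There is a genuine gap: your write-up is a strategy, not a construction, and the piece that is missing is exactly the hard part. You never specify the voters' preference orders, the threshold, or $\varphi$, and both pillars of your correctness argument are asserted rather than proved: (i) that a single bribed ``set-voter'' can activate precisely the three element-parties of its set and nothing else useful, and (ii) that no alternative use of the budget (partially activating element-parties while inflating an anchor's score, or demoting filler/non-coalition parties to shrink the active denominator) reaches $\varphi$. Under $1$-bribery a bribed voter's ballot may be rewritten arbitrarily, and under Borda one rewritten ballot hands out $m-1, m-2, m-3,\ldots$ points to whatever parties are placed on top while simultaneously stripping points from parties pushed down; so if, as you first say, each element-party only needs ``a small, indivisible boost,'' a single bribe can activate far more than three of them, and the reverse direction of your reduction collapses. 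To block this you would need every element-party to sit near the top of every ballot except those of its own sets (so that only the designated bribes can deliver a large gain), which in turn fixes its initial score to be enormous and forces the threshold, the anchor, the filler block, and $\varphi$ into a system of tight inequalities that you explicitly defer (``I expect to handle it''). That deferred bookkeeping is the proof.

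For contrast, the paper's reduction (also from the $3$-$4$-Exact-Cover problem, but with $4$-element sets and each element in exactly $3$ sets) runs the gate in the opposite direction, which sidesteps your difficulty: the coalition $A$ is a large block of $mn+1$ parties, the element-parties $u_j$ lie \emph{outside} $A$, $\varphi=1$, and a successful bribe must \emph{deactivate} every $u_j$ by removing $mn+1$ points from it. Each set $D_i$ gets a pair of voters $v_i,v_i'$ whose ballots are reverses of one another over $A$, so all coalition parties score identically and each $u_j$'s score depends only on which sets contain $z_j$; the only voters from whom $u_j$ can lose a large block of points are the $v_i$ with $z_j\in D_i$ (demoting $F_i$ below the $A$-block costs $mn+1$ points per member of $F_i$), while any bribe of an unrelated voter can cost $u_j$ at most about $n$ points, and the budget $n/4$ makes $n^2/4 < mn+1$. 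Deactivation is robust to arbitrary re-orderings because a bribe can only take away points the voter was already giving, whereas your activation route must bound how many points an arbitrary new ballot can inject — the very estimate you leave open. As it stands, the proposal does not establish the theorem.
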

%The proof is deferred to the appendix.
Theorem $5$ of ~\cite{put2016complexity} establishes the NP-hardness for shift-bribery under Borda with an election threshold and a single preferred candidate.  This implies NP-hardness for our setting for both coalition-shift-bribery and swap-bribery.
\begin{corollary}
     Borda$_t$-CB is NP-hard for coalition-shift-bribery and swap-bribery.
    \label{cor:shift_swap_non-Borda-t-cbp}
\end{corollary}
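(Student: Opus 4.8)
The plan is to read the statement off Theorem~5 of \cite{put2016complexity} via the trivial observation that a single preferred party is the special case of a singleton coalition. Take a hard instance of Borda-with-threshold shift-bribery: an election with preferred party $c_1$, threshold $t$, budget $B$, monotone cost functions $s_i$, and target seat share $\varphi$, the question being whether some shift-bribe of cost at most $B$ makes $\seats{\text{Borda}}{t}(\bbribe,c_1)\geq\varphi$. Map it to the Borda$_t$-CB instance with the same election, $t$, $B$, $s_i$, $\varphi$, and coalition $A=\{c_1\}$. With $A=\{c_1\}$, the coalition-shift-bribery move constraint (a member of $A$ may only move up relative to a non-member) is exactly ``$c_1$ may only move up'', and the coalition-shift cost $s_i$ of the number of out-of-order pairs collapses to $s_i$ of the number of parties that $c_1$ passes, i.e.\ the classical shift cost. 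Since $\seats{\text{Borda}}{t}(\bbribe,C)=1$, the CB acceptance condition $\seats{\text{Borda}}{t}(\bbribe,A)\geq\varphi\cdot\seats{\text{Borda}}{t}(\bbribe,C)$ is precisely $\seats{\text{Borda}}{t}(\bbribe,c_1)\geq\varphi$. Hence the identity map is a polynomial-time reduction, and Borda$_t$-CB with coalition-shift-bribery is NP-hard.

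For swap-bribery, compose the above with the gadget from the proof of Corollary~\ref{cor:swap-Plurality-t-CBP}. Inspecting the reduction behind Theorem~5 of \cite{put2016complexity}, the hard shift-bribery instances may be taken with multiplicative cost, $s_i(k)=(s_i)'\cdot k$. Build a swap-bribery instance on the same election with $sw_i(x,c_1)=(s_i)'$ for all $x$ and $sw_i(x,y)=B+1$ for all $y\neq c_1$. Any bribe of cost at most $B$ then contains no out-of-order pair of the second kind, so it moves only $c_1$ and only upward, paying exactly $(s_i)'$ per position crossed, which matches $s_i$. Thus affordable swap-bribes correspond one-to-one and cost-for-cost with affordable shift-bribes, the two instances are equivalent, and Borda$_t$-CB with swap-bribery is NP-hard as well.

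The one thing that needs checking, and hence the main (essentially the only) obstacle, is that the instance produced by Theorem~5 of \cite{put2016complexity} has the shape the two reductions require: a ``seat share $\geq\varphi$'' style acceptance condition for the coalition-shift step, and multiplicative $s_i$'s for the swap step. Both are standard features of the $3$-$4$-Exact-Cover-style constructions for Borda shift-bribery --- the same family our own Theorem~\ref{thm:1-Borda-t-cbp} builds on --- so either citing or re-deriving the precise form of that construction suffices; everything else is the tautology $A=\{c_1\}$ together with the already-established swap gadget.
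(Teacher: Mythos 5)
Your proposal is correct and follows essentially the same route as the paper, which simply inherits hardness from Theorem~5 of \cite{put2016complexity}: the singleton-coalition observation handles coalition-shift-bribery, and the shift-to-swap cost gadget (as in Corollary~\ref{cor:swap-Plurality-t-CBP}) handles swap-bribery. You merely spell out the details the paper leaves to the citation, including the (correctly flagged, and indeed satisfied) requirement that the hard instances of \cite{put2016complexity} use multiplicative per-shift costs.
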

Since Borda$_t$-CB is a special case of Borda$_t$-CBP: %, we obtain
\begin{corollary}
     Borda$_t$-CB is NP-hard for $1$-bribery, $\$$-bribery, coalition-shift-bribery, and swap-bribery.
    \label{cor:all-Borda-t-ccbp}
\end{corollary}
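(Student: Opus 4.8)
The plan is to obtain the corollary as a direct aggregation of the two NP-hardness results already proved for Borda$_t$-CB in this section, since between them they exhaust precisely the four bribery types named in the statement. No new reduction is required; the work lies entirely in identifying which earlier result supplies each bribery type and checking that together they leave no case uncovered.

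First I would appeal to Theorem \ref{thm:1-Borda-t-cbp}, which, via its reduction from $3$-$4$-Exact-Cover, establishes that Borda$_t$-CB is NP-hard under $1$-bribery and under $\$$-bribery. This disposes of the first two of the four bribery types. The only thing worth re-checking here is that the construction used for that reduction genuinely lives in the threshold ($t>0$) regime, since that is where the hardness comes from (recall that Borda$_{0}$-CB is polynomial for $1$- and $\$$-bribery by Corollary \ref{cor:1-dolar-shift-Borda-CBP}); this is already guaranteed by the statement of the theorem.

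Next I would appeal to Corollary \ref{cor:shift_swap_non-Borda-t-cbp}, which—building on the single-preferred-candidate shift-bribery hardness of Theorem 5 of \cite{put2016complexity}—establishes that Borda$_t$-CB is NP-hard under coalition-shift-bribery and under swap-bribery. This disposes of the remaining two types. Here the point to verify is that the single-party instances produced by \cite{put2016complexity} embed into our coalitional model, i.e.\ that taking the coalition $A$ to be the singleton containing the one preferred candidate turns their instance into a legitimate Borda$_t$-CB instance with the same yes/no answer; this is exactly the reasoning already carried out for Corollary \ref{cor:shift_swap_non-Borda-t-cbp}.

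The only remaining step is bookkeeping: confirming that the four bribery types listed in the corollary ($1$-bribery, $\$$-bribery, coalition-shift-bribery, swap-bribery) coincide exactly with the union of the types handled by Theorem \ref{thm:1-Borda-t-cbp} and Corollary \ref{cor:shift_swap_non-Borda-t-cbp}. Once this is verified the corollary follows immediately. Thus the \emph{main obstacle} is not a substantive new argument at all—the genuine hardness content for each individual bribery type was already produced in the preceding theorem and corollary—but merely the verification that these two results jointly cover every case named in the statement, with no bribery type left unaddressed.
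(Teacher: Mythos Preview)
Your argument is correct for the statement exactly as printed (about Borda$_t$-CB): aggregating Theorem~\ref{thm:1-Borda-t-cbp} and Corollary~\ref{cor:shift_swap_non-Borda-t-cbp} covers all four bribery types, and nothing further is required.

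Note, though, that the paper's own justification is the single clause ``Since Borda$_t$-CB is a special case of Borda$_t$-CBP,'' and in Table~\ref{tab:sum_complexity} this corollary is cited exclusively in the Borda$_t$-\emph{CBP} row (the Borda$_t$-CB row already points directly to Theorem~\ref{thm:1-Borda-t-cbp} and Corollary~\ref{cor:shift_swap_non-Borda-t-cbp}). This strongly indicates a typo in the corollary: it is meant to read ``Borda$_t$-\emph{CBP} is NP-hard\ldots.'' Under that intended reading, the paper's argument is: CB hardness for all four bribery types is already in hand (via exactly the aggregation you give), and every CB instance is a CBP instance with $\rho=0$, so CBP inherits the hardness. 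Your write-up supplies the first half of this; to match the paper's intended claim you would need only the additional one-line special-case observation.
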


\section{Conclusions and Future Work}
 We introduced and studied the complexity of bribery in parliamentary elections when the goal is not (only) to promote a specific party, but rather an entire coalition of parties.  
We introduced two key problems, the Coalition-Bribery problem (CB) and the Coalition-Bribery-with-Preferred-party-Problem (CBP), both extending the classic bribery by incorporating coalition objectives. Our study addressed both Plurality and Borda scoring rules, as well as the impact of threshold constraints in the seat allocation rule.

 An obvious future direction is to extend the study to other multi-winner voting rules (e.g.~ STV, PAV, Cumulative Voting, Limited Voting, ...). Additionally, we considered a single briber (as is customary).  The case with multiple bribers interacting in a strategic game is an interesting and realistic question.  In this case, an equilibrium analysis is called for. 
 
More broadly, in this paper we only considered bribery. It would be interesting to study related notions, such as \emph{control}, \emph{manipulation}, and \emph{gerrymandering},  in the context of coalitions.  %Even more broadly, concepts such as the core and power maximization can also be explored in the context of a preferred coalition, or coalition.
\section{Acknowledgments}
This research is partly supported by the Israel Science Foundation grants 2544/24, 3007/24 and 2697/22.

%% The file named.bst is a bibliography style file for BibTeX 0.99c
\bibliographystyle{acm}
\bibliography{ijcai25}

\clearpage
\appendix
\section{Table of Notations}
\begin{table}[h!]
    \centering
    \renewcommand{\arraystretch}{1.2} % Adjust row height (optional)
    \begin{tabular}{|p{3cm}|p{5cm}|} % Adjust column widths as needed
        \hline
        \hline
        \textbf{Notation} & \textbf{Description} \\ \hline
        $E = (C, V, \mathcal{O})$ & A parliamentary election, where $C$ is the set of political parties, $V$ is the set of voters, \\
                                  & and $\mathcal{O}$ is the set of voters' preference orders. \\ \hline
        $C = \{c_1, \ldots, c_m\}$ & Set of political parties (candidates). \\ \hline
        $V = \{v_1, \ldots, v_n\}$ & Set of voters. \\ \hline
        $\mathcal{O} = \{O_1, \ldots, O_n\}$ & Sequence of voters' preference orders. \\ \hline
        $pos_{O_i}(c_j)$ & Position of party $c_j$ in voter $v_i$'s preference order $\mathcal{O}_i$. \\ \hline
        $\gamma$ & Positional scoring function. \\ \hline
        $\gamma(\mathcal{O}, c_j)$ & Total points attained by party $c_j$ under orders $\mathcal{O}$. \\ \hline
        $\seats{\gamma}{t}$ & Fraction of seats allotted to a party under scoring function $\gamma$ and threshold $t$. \\ \hline
        $t$ & Threshold fraction below which parties are inactive. \\ \hline
        $\bbribe$ & Sequence of modified preference orders after the bribe. \\ \hline
        $\pi_i(\mathcal{O}_i')$ & Cost of modifying voter $v_i$'s preference order from $\mathcal{O}_i$ to $\mathcal{O}_i'$. \\ \hline
        $\pi(\bbribe)$ & Total cost of a bribe $\bbribe$. \\ \hline
        $A = \{c_1,\ldots,c_h\}$ & Coalition of parties in the CB and CBP problems. \\ \hline
        $\varphi$ & Target support fraction for the coalition in CB and CBP. \\ \hline
        $\rho$ & Target ratio for a preferred party's share within the coalition in CBP. \\ \hline
        $A_{-1}$ & $ A\setminus\{c_1\}$\\ \hline
        $\bar{A}$  & $C\setminus A$ \\ \hline
        $top(O_i)$ &  the first party in $O_i$. \\ \hline
        $B$ & The budget \\ \hline
    \end{tabular}
    \caption{Summary of Notations}
    \label{tab:notations}
\end{table}

\section{Example}
Here we provide an additional example to emphasize the differences between our coalition bribery with a preferred party problem and the bribery problem in a parliamentary election for a single preferred candidate, as studied in the literature~\cite{put2016complexity}.

We start with a simple example. 
\begin{Example}
    The input to the CB is the election $E=(C,V,O)$, 
    Where $C = \{c_1, c_2, c_3, c_4\}$, $V = \{v_1,v_2,v_3,v_4\}$ and the preference orders are $O_i= c_4\succ c_3 \succ c_2 \succ c_1$. 
    The coalition $A =\{c_1,c_2\}$.
    The bribery type is $1$-bribery, and the budget is $B = 1$.
    The target support $\varphi = \frac{1}{4}$. 

    Assume that the seat allocation function is $\seats{\text{Plurality}}{0}$, $\seats{\text{Plurality}}{0}(\mathcal{O}, A) = 0$.
    Then bribe $v_1$ to change his preference order to have $c_1$ or $c_2$ in the first place, is sufficient to have that $\seats{\text{Plurality}}{0}(\bbribe, A) = \frac{1}{4} \geq \varphi$.

     Assume that the seat allocation function is $\seats{\text{Borda}}{0}$, $\seats{\text{Borda}}{0}(\mathcal{O}, A) = \frac{4}{24}$.
    Then bribe $v_1$ to change his preference order to have $c_1$ and $c_2$ in the first and second places, is sufficient to have that $\seats{\text{Borda}}{0}(\bbribe, A) = \frac{8}{24} \geq \varphi$.
\end{Example}

\begin{Example}
Assume that the seat allocation function is $\seats{\text{Plurality}}{t}$, and the election threshold is $\frac{1}{8}$.
Let the budget be $B=3$. 

In the case of a single preferred party (candidate), the desired fraction is $\frac{3}{8}$,  the preferred party is $c_1$, and the bribery type is shift-bribery. 

In the CBP let the target support $\varphi=\frac{1}{2}$, and the target ratio  $\rho=\frac{3}{4}$. The bribery type is coalition-shift-bribery. The coalition is $A=\{c_1,c_2\}$, and the preferred party is $c_1$.

Let the set of parties be $C = \{c_1, c_2, c_3\}$ and the set of voters be $V=\{v_1, v_2, \ldots v_{16}\}$.

The preference orders $O_1, O_2,\ldots, O_5$ is: $c_1 \succ c_2 \succ c_3$.
The preference order $O_6$ is: $c_2 \succ c_3 \succ c_1$.
The preference orders $O_{7}, O_{8}, \ldots, O_{16}$ is: $c_3 \succ c_1 \succ c_2$.

The price function for $v_1, v_7, v_8$ is $1$ for a single shift and $5$ for two shifts.
The price function for all other voters is $5$ per shift. 

In the parliamentary election with a single preferred party, for successful bribe, one of the voters who originally voted for $c_3$ must be bribed to vote for $c_1$ (i.e., either $v_7$ or $v_8$, as the budget is insufficient to bribe other voters).

However in CBP, two voters who originally voted for $c_3$ must be bribed to vote for $c_1$ (i.e., $v_{7}$ and $v_{8}$), resulting  in a coalition of size $\frac{7}{16}$ which is less than $\varphi$.
Then, $v_1$ who originally voted for $c_1$ needs to be bribed to vote for $c_2$.
This allows $c_2$ to pass the election threshold, enabling the coalition to achieve $\frac{6+2}{16} = \varphi$, while $c_1$ gets $\frac{3}{8}$ from the coalition. 
\end{Example}

\section{Proof of Theorem \ref{thm:shift-Plurality-t-CBP}}
\begin{definition}[3-4-Exact-Cover problem]
    Given a set $Z = \{z_1,\ldots z_n\}$ with $n$ elements.
    A collection $D$ of $4$-elements subsets of $Z$.
    That is,  $D=\{D_1, D_2, \ldots D_m\}$ such that each $D_i\subseteq Z$ and $|D_i|=4$.
    Moreover, each $z_i \in Z$ belongs to exactly $3$ subsets in $D$. 
    The goal is to find sub-collection $D'\subseteq D$ such that every element in $Z$ is included in exactly one subset in $D'$.
\end{definition}
\begin{reminder}{\ref{thm:shift-Plurality-t-CBP}}
       Plurality$_t$-CB with coalition-shift-bribery is NP-hard, even when the price function is such that all $s_i(x)$'s are linear.
\end{reminder}
\begin{proof}
    Given an instance of the 3-4-Exact-Cover problem, where $Z=\{z_1,z_2,\ldots z_n\}$ is a set of elements and $D=\{D_1, D_2, \ldots D_m\}$  is a collection of $4$-element subsets of $Z$.
    Construct an instance of Plurality$_t$-CB such that there is a successful bribe if and only if an exact cover exists. 
    
    \textbf{The election.}
    \sloppy{The election $E=(C,V,O)$ is defined as follows:
    The set of parties is $C=\{c_1, c_2,\ldots c_m, c_{m+1}, c_{m+2}, \ldots, c_{m+n}\ldots c_{m+3n +1}\}$.}
    Each party $c_j$ for $1\leq j\leq m$ corresponds to the subset $D_j\in D$. 
    The set of voters is $V = \{v_1,v_2,\ldots, v_n, v_{n+1}, \dots v_{2n}\}$.
    Each voter $v_i$ for $1\leq i\leq n$, corresponds to the element $z_i \in Z$.
    The election threshold be $t=\frac{4}{2n}$.
    
    \textbf{Voter preference orders.}
    \begin{itemize}
        \item For each  $1\leq i\leq n$, the preference order $O_i$ is: party $c_{m+1}$ in the first place, followed by the $3$ parties corresponding to the subsets that contain $z_i$, then the parties $c_j$ for $m+2 \leq j \leq m+3n +1$, and then all other parties in an arbitrary order.  
        \item For each $n+1\leq i\leq 2n$, the preference order $O_i$ is: party $C_{m+1}$ in the first, followed by the parties $c_j$ for $m+2 \leq j \leq m+3n+1$, and then all other parties in an arbitrary order. 
    \end{itemize}
    
    \textbf{CB.}
    Let the coalition $A =\{c_1, c_2,\ldots c_m\}$.  The bribery type is coalition-shift-bribery and the price function for each voter is $s_i(x) = x$.
    The Budget $B=3n$, and the target support is $\varphi = \frac{1}{2}$.

    \textbf{Constraints.}
    \begin{itemize}  
        \item  For each voter $v_i$ where $n+1\leq i\leq 2n$, there are $3n+1$ parties before the parties from $A$, so it is impossible to bribe these voters.
        Therefore, there are at least $n$ votes for $c_{m+1}$.
        \item  For each voter $v_i$ where $1\leq i\leq n$, it is possible only to bribe them in such a way that $v_i$ votes for one of the parties corresponding to a subset $D_j$ that contains $z_i$, since there are $3n+4$ parties before the other parties from $A$.
        \item In order to have $\sum_{a \in A} (R(bribe(E,seq), a)) \geq \varphi = \frac{1}{2}$, there must be $n$ votes for $a \in A$.% in a way that each $a$ that get votes, must get at lest $4$ votes.
         \item  There are $2n$ voters. Since $t=\frac{4}{2n}$, only parties that receive at least $4$ votes pass the election threshold.
         \item A party $c_i$ will be part of the parliament only if all four voters corresponding to the elements of the set $D_i$ change their preference order to rank  $c_i$ in the first place.
         \item  It is possible to bribe all voters $v_i$, for $1\leq i\leq n$, to vote for a party corresponding to a subset $D_j$ that contains $z_i$, as the cost of each such bribe is at most $3$.
    \end{itemize}
    
    \textbf{To sum up.}
    Assume that there is a successful bribe. 
    Since $|D_j|= 4$, each $c_j$ can receive at most $4$ votes after the bribe. 
    In order to satisfy $\sum_{a \in A} (R(bribe(E,seq), a)) \geq\frac{1}{2}$, there must be $\frac{n}{4}$ parties from $A$ that receive $4$ votes.  These  $\frac{n}{4}$ parties correspond to $\frac{n}{4}$ subsets that form an exact cover.

    On the other hand, assume that there is an exact cover. A successful bribe would involve bribing each voter $v_i$ for $1\leq i\leq n$, to vote for the party corresponding to the subset that covers $z_i$. 
\end{proof}

\section{proof of theorem \ref{thm:1-Borda-t-cbp}}
For $1$-bribery and $\$$-bribery, Borda$_t$-CB is an NP-hard problem we show a reduction from the $3$-$4$-Exact-Cover problem. This reduction is inspired by the reductions from~\cite{elkind2009swap,put2016complexity}.
\begin{reminder}{\ref{thm:1-Borda-t-cbp}} 
    Borda$_t$-CB is an NP-hard problem for $1$-bribery and $\$$-bribery.
\end{reminder}
$S$
\begin{proof}
    Given an instance of the $3$-$4$-Exact-Cover problem. 
    Let $Z=\{z_1,z_2,\ldots,z_n\}$ be the set of elements, and let  $D=\{D_1,D_2,\ldots,D_m\}$ be the collection of subsets.
    Construct an instance of Borda$_t$-CB such that there is a successful bribe if and only if an exact cover exists.

    \textbf{The election.}
    \sloppy{Let the set of parties be: $C=\{c_1,c_2,\ldots,c_{mn+1}, u_1, u_2,\ldots, u_n\}$. Let $U = \{u_1, u_2,\ldots, u_n\}\subset C$ be a subset of parties, where each party $u_j$ is corresponding to element $z_j\in Z$.}
    
    Let the preferred coalition of parties be $A=\{c_1,c_2,\ldots,c_{mn+1}\}$. 
    Let the set of voters be $V=\{v_1,v_2,\ldots, v_m, v'_1,v'_2,\ldots, v'_m\}$,
    where voters $v_i, v'_i$ correspond to the subset $D_i$.
    
    Let $F_i$ be the parties from $U$ corresponding to the elements in $D_i$, arranged in a fixed order, and let $\overline{F_i}$ represent the same elements in reverse order. 
    Similarly, $U\setminus F_i$ and $\overline{U\setminus F_i}$ represent the parties corresponding to elements of $Z$ not in $D_i$ in a fixed order and in reverse order, respectively. 
    We slightly abuse notation and use $A$ to refer both to the set of coalition parties and to the same parties in ascending order and $\overline{A}$ be these parties in a reverse order.
    
    \textbf{Voter preferences.}
    The preference order of $v_i$ is: $F_i \succ A \succ U\setminus F_i$.
    The preference order of $v'_i$ is: $\overline{A} \succ \overline{U\setminus F_i}\succ \overline{F_i}$.
    
    Consider party $c_1$.
    From each voter $v_i$, $c_1$ receives $mn+1+n-5$ points, since $c_1$ ranked after the parties from $F_i$.
    From each voter $v'_i$, $c_1$ receives $n$ points, since only $n$ parties ranked after him. 
    Thus, $c_1$ receives $m\cdot (mn+n-4)+m\cdot n$.
   
    Since the order of parties from $A$ in $v_i$'s preference order is the reverse of their order in  $v'_i$'s preference order, and both orders rank all parties from $A$ consecutively, the number of points each party from $A$ receives is identical.
    Therefore, each $c_i\in A$ receives $m^2n+2mn-4m$ points.

    Consider a party $u_j \in U$.
    For the same reasons as above, the position of $u_j$ within the chosen order does not affect the total number of points it receives from $v_i$ and $v'_i$ combined. What matters is only whether $z_j\in D_i$ (whether the element corresponding to $u_j$ is in the subset associated with these voters).  
    There are three subsets $D_i$  such that $z_j\in D_i$. 
    From each pair of voters, $v_i, v'_i$, corresponding to these subsets, the party $u_j$ receives $mn+n+1$ points. Since, there are three such subsets, $u_j$ receives $3(mn+n+1)$ points from these six voters. 
    From each other pair of voters $v_i,v'_i$, where $z_j\not\in D_i$, the party $u_j$ receives $n$ points.
    Since there are $n-3$ such subsets, $u_j$ receives $n^2-3n$ points from these $2n-6$ voters.  
    
    Thus, we have that: 
    \begin{enumerate}
        \item Each $c_i\in A$ gets: $m^2n+2mn-4m$ 
        \item Each $u_i \in U$ gets: $n^2+3mn+3$
    \end{enumerate}

    \textbf{CB.}
    Let $T$ be the minimum number of points a party must receive to pass the election threshold.  
    Let $\varphi = 1$. To ensure that the coalition $A$  gets $100\%$ of seats in the parliament, the bribe must reduce the points of each party  $u_i\in U$ to less than $T$, while ensuring that at least one party from $A$ receives more than $T$. 

    Set $T$ to be $n^2+2mn+2$, meaning that the bribe needs to reduce the score of each $u_i$ by $mn + 1$ points. 

    In the $3$-$4$-exact-cover problem, it holds that $m=\frac{3n}{4}$.
    Therefore, for each $n\geq 4$, the parties from the coalition $A$ will pass the election threshold.
    The budget is set to $B=\frac{n}{4}$, and the price function of all voters is $\pi_i=1$.

    \textbf{To sum up.}
    We will show that if there exists an exact cover, then a successful bribe is possible. Conversely, if there is no exact cover, then no successful bribe can be achieved.

    First, assume that there is an exact cover. 
    Let $D'\subset D$, be the exact cover.
    It holds that $|D'| = \frac{n}{4}$, and we denote $D' = \{d_{i_1}, d_{i_2}, \ldots ,d_{i_{\frac{n}{4}}}\}$.
    The bribe that bribes the corresponding voters $v_{i_1}, v_{i_2}, \ldots ,v_{i_{\frac{n}{4}}}$ by changing the preference order from $F_i \succ A \succ U\setminus F_i$ to $A \succ F_i \succ U\setminus F_i$ is a successful bribe. Since each element appears in exactly one of the subsets in the exact cover, the corresponding party will be in exactly one of the bribed voters' preference orders in a way that changed his position, causing them to lose exactly $mn+1$ points. This results in those parties receiving one point less than $T$.

    On the other hand, assume that there is no exact cover, and by contradiction, suppose there is a successful bribe.
    Hence, at most $\frac{n}{4}$ voters bribed in the bribe, which implies that there must be at least one $z_i$ such that the voters corresponding to the subset containing $z_i$ are not bribed. 
    Thus, each change in a voter’s preference order can decrease his score by at most $n$ points. The total decrease is therefore bounded by $\frac{n^2}{4}$. 
    However, since $\frac{n^2}{4} < mn+1 = \frac{3\cdot n^2}{4} + 1$, this party would still pass the threshold. 
    And therefore, this is a contradiction to the assumption that it is a successful bribe.
\end{proof}

\section{proof of theorem \ref{thm:swap-borda-cbp}}
For swap-bribery, Borda-CB is an NP-hard problem. 
We prove it by a reduction from the NP-hard Min-Bisection problem \cite{garey1979computers}.

\begin{definition}[Min-Bisection problem]
    Given a graph $G = (V,E)$, such that $|V| = 2n$ and a number $k\in \mathbb{N}$.
    The goal is to determine whether it is possible to partition the vertices of the graph into two disjoint sets $V_1$ and $V_2$, such that $|V_1| = |V_2| = n$ and there are at most $k$ edges crossing between $V_1$ and $V_2$.
\end{definition}
\begin{reminder}{\ref{thm:swap-borda-cbp}}
    Borda$_{0}$-CB is an NP-hard problem for swap-bribery. 
\end{reminder}
\begin{proof}
    Given an instance of the Min-Bisection problem with a graph $G = (U, E)$ and a bound $k$.
    Construct an instance of Borda-CB such that a successful bribe can be achieved if and only if there is a feasible solution to the Min-Bisection problem for $G$ and $k$.

    \sloppy{The set of parties contains two parties, $c_{1,i}$ and $c_{2,i}$, for each $u_i \in U$ and additional one vertex, $x$. That is $C = \{c_{1,1}, c_{1,2} \ldots, c_{1,2n}, c_{2,1}, c_{2,2} \ldots, c_{2,2n}, x\}$, the preferred coalition of parties is $A = \{c_{1,1}, c_{1,2} \ldots, c_{1,2n}, c_{2,1}, c_{2,2} \ldots, c_{2,2n} \}$.
    Let $C^1 = \{c_{1,1}, c_{1,2} \ldots, c_{1,2n}\}$ and $C^2 = \{c_{2,1}, c_{2,2} \ldots, c_{2,2n}\}$.
    The set of the voters is $V= {v_1}$.}
    
    \textbf{Voter's preference order.}
    The preference order of $v_1$ is: $x \succ c_{1,1} \succ c_{1,2} \ldots \succ c_{1,2n} \succ  c_{2,1}\succ c_{2,2} \ldots \succ c_{2,2n}$
    
    \textbf{Voter's price function.}
     the price function of $v_1$ is as follows:
     \begin{itemize}
        \item $pi_1^{\{c_{1,i},c_{1,j}\}} = 0$
        \item $pi_1^{\{c_{2,i},c_{2,j}\}} = 0$
        \item $\pi_1^{\{c_{2,i}, c_{1,i}\}} = B+1$
        \item for $i\not=j$, $pi_1^{\{c_{2,i}, c_{1,j}\}} = 1$ if $(i,j)\in E$ and $0$ otherwise
        \item  $pi_1^{\{c_{1,i}, x\}} = (kn)^2$
        \item  $pi_1^{\{c_{2,i}, x\}} = (kn)$
     \end{itemize}
     
    \textbf{CB.}
    The budget is $B = n\cdot(k+n)^2+n\cdot k\cdot n + k$.
    Let the target support, $\varphi$, be $\frac{4n}{4n+1}$. That is, a successful bribe achieves at least $2n$ new points for the coalition within the budget.

    \textbf{To sum up.}
    We will show that there is a successful bribe if and only if a feasible solution exists to the Min-Bisection problem.

    First, assume that a feasible solution exists to the Min-Bisection problem.
    Then let $U_1$, $U_2$, be the partition of $U$ in a feasible solution.
    
    For each $u_i \in U_1$ first shift all corresponding parties from $C^1$ to a position before  $x$, then shift all corresponding parties from $C^2$ to a position before  $x$.
    As a result, in the new preference order, the $c_{1, i}$  and $c_{2, i}$ parties corresponding to $U_1$ will appear sequentially before $x$, while those corresponding to $U_2$ will remain after $x$. Thus, the new preference order will be:
    $c_{1,i_1}\succ \ldots \succ c_{1,i_n}\succ c_{2,i_1} \succ \ldots \succ c_{2,i_n}  \succ  x \succ c_{1,j_1}\succ \ldots \succ c_{1,j_n} c_{2,j_1}\succ \ldots \succ c_{2,j_n}$.
    
    The size of $U_1$ is $n$, and each swap between a party and $x$ achieves one new point for the coalition. 
    Hence, in this bribe, the coalition achieves $2n$ points by the bribe. 
    
    The cost of such a bribe consists of the following:
    \begin{enumerate}
        \item Shift of $c_{1,i}$ for each $u_i \in U_1$: each such shift costs $(k+n)^2$, since only the swap with $x$ cost $(kn)^2$ and all other saps are free. There are $n$ parties of this type that are shifted, hence the total cost for all $n\cdot(k+n)^2$.
        \item Shifts of $c_{2,i}$ for each $u_i \in U_1$: each such shift cost $k\cdot n$ for the swap with $x$ plus $1$ for each edge that $u_i$ has to vertices in $U_2$, i.e., it add to the cost: $|\{j\in U_2| (i,j)\in E\}|$. There are $n$ parties of this type and since the partition is according to a feasible solution of the Min-Bisection problem the total number of such edges is at most $k$, yielding a total cost of $n\cdot k\cdot n + k$.
    \end{enumerate}

    Thus, the combined cost is at most $n\cdot(k+n)^2+n\cdot k\cdot n + k$ and does not exceed $B$.

    On the other hand, assume that there is no feasible solution to the Min-Bisection problem.
    The party $x$ is the only one not included in the coalition $A$, meaning that only a swap between $x$ and some other party contributes a point to the coalition. 
    %There are two types of parties $C_{1,i}$ and $c_{2,j}$. 
    The budget $B$ allows a party from $c^2$ to be shifted before $x$ only after the corresponding party from $c^1$ has been shifted before $x$. 
    In addition, the budget only allows shifting up to $n$ parties from $c^1$.
    Hence in order to achieve $n$ new points, the bribe must shift $n$ parties from $C^1$, and then shift the $n$ corresponding parties from $C^2$.
    
    For each subset, $C'\subset C^1$, of $n$ parties, let $U_1$ and $U_2$ be the corresponding partitions of the vertices of $U$.
    Shifting the parties from $C'$ allows the shifting of their corresponding parties from $C^2$; however, this incurs additional costs, specifically, an additional cost of $1$ for each edge between $U_1$ and $U_2$.  
    Since a feasible solution to the Min-Bisection problem does not exist,  every possible subset $C'$ yields more than $k$ edges between $U_1$ and $U_2$.
    Thus, the excessive edge costs increase the total cost beyond $B$, making it impossible to shift all corresponding parties from $C^2$.
    As a result, there is no successful bribe.   
    
\end{proof}
 Borda$_{0}$-CBP is an extension of Borda$_{0}$-CB, hence Borda$_{0}$-CBP is also an NP-hard problem.

\end{document}